\newcommand{\Reals}{\mathbb{R}}
\newcommand{\C}{\mathbb{C}}
\newcommand{\Spec}{\operatorname{Spec}}
\newcommand{\abs}[1]{\left\vert #1 \right\vert}
\newcommand{\conj}[1]{\overline{#1}}
\newcommand{\norm}[1]{\left\Vert #1 \right\Vert}
\newcommand{\Res}{\mathop{\mathrm{Res}}}
\renewcommand{\Im}{\operatorname{Im}}
\renewcommand{\Re}{\operatorname{Re}}
\newtheorem{theorem}{Theorem}[section]
\newtheorem{condition}[theorem]{Hypothesis}
\newtheorem{corollary}[theorem]{Corollary}
\newtheorem{definition}[theorem]{Definition}
\newtheorem{proposition}[theorem]{Proposition}
\newtheorem{remark}[theorem]{Remark}
\newtheorem{lemma}[theorem]{Lemma}
\numberwithin{equation}{section}
\begin{document}
\title[Inverse Scattering Procedure]{On the determinant formula in the inverse scattering procedure with a partially known steplike potential}
\author{Odile Bastille, Alexei Rybkin}
\address{University of Alaska Fairbanks}
\date{July, 2011}
\address{Department of Mathematics and Statistics \\
University of Alaska Fairbanks\\
PO Box 756660\\
Fairbanks, AK 99775}
\email{orbastille@alaska.edu \\arybkin@alaska.edu}
\thanks{Based on research supported in part by the NSF under grant DMS
1009673.}
\subjclass{34L25, 34B20, 47B35}
\keywords{Marchenko inverse scattering, Schr\"{o}%
dinger operator, Titchmarsh-Weyl $m-$function.}

\begin{abstract} We are concerned with the inverse scattering problem for the full line Schr\"odinger operator $-\partial_x^2+q(x)$ with a steplike potential $q$ a priori known on $\Reals_+=(0,\infty)$. Assuming $q|_{\Reals_+}$ is known and short range, we show that the unknown part $q|_{\Reals_-}$ of $q$ can be recovered by
\begin{equation*}
q|_{\Reals_-}(x)=-2\partial_x^2\log\det\left(1+(1+\mathbb{M}_x^+)^{-1}\mathbb{G}_x\right), 
\end{equation*} 
where $\mathbb{M}_x^+$ is the classical Marchenko operator associated to $q|_{\Reals_+}$ and $\mathbb{G}_x$ is a trace class integral Hankel operator. The kernel of $\mathbb{G}_x$ is explicitly constructed in term of the difference of two suitably defined reflection coefficients. Since $q|_{\Reals_-}$ is not assumed  to have any pattern of behavior at $-\infty$, defining and analyzing scattering quantities becomes a serious issue. Our analysis is based upon some subtle properties of the Titchmarsh-Weyl $m$-function associated with $\Reals_-$.
\end{abstract}

\maketitle
\tableofcontents

\section{Introduction}

As the title suggests, we are concerned with the recovery of an unknown potential of the full line Schr\"{o}dinger operator $-\partial_x^2+q(x)$ from a certain set of scattering data. In its classical formulation, the scattering data consist of the reflection coefficient $R$, bound states $\{-\varkappa_n^2\}_{n=1}^N$ and their norming constants $\{c_n\}_{n=1}^N$.

It is well-known (see e.g. \cite{Deift79}) that the inverse problem
\begin{equation}\label{eq1.1}
S:= \left\{ R(k),-\varkappa_n^2,c_n\right\}_{k\in\Reals,1 \le n\le N} \quad \longrightarrow \quad q(x)
\end{equation}
for all $x$ is uniquely solvable through the famous Faddeev-Marchenko inverse scattering procedure. In fact, there is an explicit formula, referred to sometimes as Bargmann or Dyson, (see e.g. \cite{Fad})
\begin{equation}\label{eq1.2}
q(x)=-2\partial_x^2\log\det (1+\mathbb{M}_x),
\end{equation}
where $\mathbb{M}_x$ is the so-called Marchenko operator, a Hankel integral operator whose kernel is constructed in terms of $S$.

In practice, however, norming constants $\{c_n\}$ are not available. Moreover the authors are unaware of their physical meaning either. Notice that the inverse problem
\begin{equation}\label{eq2.1}
\left\{ R(k)\right\}_{k\in\Reals} \quad \longrightarrow\quad q(x)
\end{equation}
is solved uniquely only if the operator $-\partial_x^2+q(x)$ has no (negative) bound states. The classical example of the so-called one-soliton potential
\begin{equation*}
q(x)=-2\varkappa^2 \operatorname{sech}^2 \left( \varkappa x+ \log \frac{\sqrt{2\varkappa}}{c}\right) 
\end{equation*}
produces the scattering data
\begin{equation*}
\left\{ 0,-\varkappa^2,c\right\}
\end{equation*}
with $R(k)=0$ for all $k$, suggesting that the inverse problem \eqref{eq2.1} is ill-posed in general.

The inverse problem \eqref{eq1.1} was originally posed and solved for $q$'s decaying sufficiently fast at infinity. The complete treatment of this problem is done in \cite{Deift79} under the Faddeev condition\footnote{Certain aspects of the theory developed in \cite{Deift79} actually require the stronger condition
$$\int_\Reals \left(1+\abs{x}^2\right)\abs{q(x)}dx<\infty .$$} 
\begin{equation}\label{eq2.2}
\int_\Reals \left(1+\abs{x}\right)\abs{q(x)}dx<\infty.
\end{equation}

While the Faddeev condition \eqref{eq2.2} is typically satisfied in nuclear physics, many interesting inverse problems (e.g. in geophysics) deal with potentials which do not decay at one of the infinities but still decay at the other infinity. Such potentials are commonly called steplike. A suitable analog of inverse problem \eqref{eq1.1} is well-posed as well (see e.g. \cite{AK01} and the literature cited therein). New circumstances arise, however, due to a richer negative spectrum of $-\partial_x^2+q(x)$. But the classical Marchenko theory can be successfully adjusted to such setting too.

In \cite{BusFo} the inverse problem was solved for $q(x)\to C$, a nonzero constant, as $x\to-\infty$ and $q(x)\to0$ as $x\to\infty$ sufficiently fast (with some gaps fixed in \cite{CoKap}). The case of $q$'s periodic on the, say, left hand side and decaying on the other have been treated first by Hruslov \cite{Hruslov76} (see also \cite{AK01}).

The more general case of $q$'s with no certain pattern of behavior at $-\infty$ and identically zero on the right hand side has been recently treated by one of the authors in \cite{RybIP09}. To the best of our knowledge, in the context of steplike potentials, the determinant formula \eqref{eq1.2} is not available in the literature.

The situation with steplike potentials is similar to the case \eqref{eq2.2} in that the reflection coefficient alone does not determine the potential uniquely. It is natural to ask what can make up for the unavailable data in \eqref{eq1.1} related to the negative spectrum? This problem has generated considerable interest (see e.g. \cite{Akt96,AKM93,BSL95,GW95,RS94}). In the context of Faddeev potentials, the inverse problem
\begin{equation}\label{eq4.1}
\left\{ R(k),q(x)\right\}_{k\in\Reals, x\ge0} \quad \longrightarrow\quad q(x) \;,\; x<0
\end{equation}
is shown to be well-posed. The problem \eqref{eq4.1} is referred to as the inverse scattering problem with partial information on the potential. In \cite{GS97}, the inverse problem \eqref{eq4.1} is solved for essentially arbitrary potentials as long as the reflection coefficient can be suitably defined. The actual procedure of solving \eqref{eq4.1} in \cite{GS97} is not scattering but rather spectral through solving the Gelfand-Levitan integral equation.

In \cite{RybIP09}, one of the authors found a way to adapt the classical Marchenko inverse scattering algorithm to solve \eqref{eq4.1}. No analog of the determinant formula \eqref{eq1.2} appears to be found for inverse problems with partial information on the potential \eqref{eq1.1} in the context of steplike potentials. The present paper intends to deal with this issue. More specifically, we show that if $q$ is locally square integrable,
\begin{equation}\label{eq5.1}
\sup_{x\le0}\int_{x-1}^x \abs{q}^2 <\infty,
\end{equation}
and the known part $q_+=q|_{(0,\infty)}$ is subject to \eqref{eq2.2} then
\begin{equation}\label{eq5.2}
q(x)=-2\partial_x^2\log\det\left( 1+(1+\mathbb{M}_x^+)^{-1}\mathbb{G}_x\right) \quad,\quad x<0.
\end{equation} 
Here $\mathbb{M}_x^+$ is the Marchenko operator constructed in terms of the scattering data for $q_+$ and $\mathbb{G}_x$ is the integral Hankel operator with the kernel expressed in terms of the difference $R-R_+$ of the (right) reflection coefficients $R$ for the whole potential $q$ and $R_+$ corresponding to $q_+$.

We emphasize that the main issue here is the existence of the determinant in \eqref{eq5.2} in the classical Fredholm sense. We prove that the operator $\left(1+\mathbb{M}_x^+\right)^{-1}\mathbb{G}_x$ in \eqref{eq5.2} is trace class for every $x$. Our arguments are based upon the simple but important observation that the difference $R-R_+$ is an analytic function (even though $R$ and $R_+$ are not) and certain limiting procedures (which, as frequently happens in similar situations, are quite involved).

From the geophysical point of view our situation is related to reflection seismology where one is concerned with recovering certain properties $q$ of deeper layers of the Earth given already known properties $q_+$ of shallower layers and measured reflections $R$. Of course if the media do not tend to be homogeneous as the depth increases, a larger range of frequencies is required to investigate deeper layers.

Another example of the inverse problem with partially known steplike potential is related to neutron reflectometry (see e.g. \cite{AS00}) where properties of an unknown material are studied by scattering neutrons and measuring their reflection. The role of $q_+$ is played by a layer of known material, called a coating.

The paper is organized as follows. Section 2 lists our notation. Section 3 introduces the scattering quantities in our setting. Section 4 details relevant properties of the Titchmarsh-Weyl $m$-function which are then applied in our context in Section 5. Section 6 and 7 respectively deal with trace class and Marchenko operators and Section 8 contains the main result.

\section{Notation}

We adhere to standard terminology from analysis, namely $\mathbb{R}%
_{\pm }:=[0,\pm \infty )$, 
$\mathbb{C}$ is the complex plane, 
\begin{equation*}
\mathbb{C}^{+}=\left\{ z\in \mathbb{C}\;:\; \Im z>0\right\} \quad,\quad i\Reals_+=\left\{z \in \C\;:\;z=iy\;,\;y\in\Reals_+\right\},
\end{equation*}
in the upper half plane,
\begin{equation*}
\Reals+ih=\{z\in\C\; :\; z=x+ih\; ,\; x\in\Reals\} 
\end{equation*}
is the real line shifted $h$ units up.
$\norm{\cdot}_{X}$ stands for a norm in a Banach
(Hilbert) space $X$. We use ($S \subseteq \Reals$ and $S$ will typically be $\mathbb{R}$ or $\mathbb{R}_{\pm }$):
\begin{itemize} 
\item the usual Lebesgue spaces 
\begin{align*}
L^{p}\left( S\right) & :=\left\{ f:\norm{f}_{L^{p}\left(
S\right) } := \left( \int_{S}\abs{f\left( x\right)}
^{p}dx\right) ^{1/p}<\infty \right\} \quad,\quad 1\leq
p<\infty \\
L^{\infty }\left( S\right) & :=\left\{ f:\norm{f}_{L^{\infty
}\left( S\right) }:= \operatorname{ess sup}_{x\in S}\abs{f\left(
x\right)} <\infty \right\} , \\
L_{loc}^{p}\left( S\right) & :=\left\{ \cap L^{p}\left( \Delta \right) :\Delta \subset S,\; \Delta \text{ compact }\right\} .
\end{align*}%
\item the short range or Faddeev class (important in scattering theory)%
\begin{equation*}
L_1^{1}\left( S\right) =\left\{ f:\norm{f}_{L_1^1(S)}:=\int_{S}\left( 1+\abs{x}
\right) \abs{f\left( x\right)} dx<\infty \right\} .
\end{equation*}%
\item the Birman-Solomyak spaces ($1\le p<\infty$)
\begin{equation*}
\ell^\infty(L^p(\Reals_\pm)) := \left\{ f: \norm{f}_{\ell^\infty(L^p(\Reals_\pm))} := \sup_{x\in \mathbb{R}_\pm} \left(\int_x^{x\pm1} \abs{f(x)}^pdx \right)^{1/p}<\infty\right\}.
\end{equation*}
\end{itemize}

Next, $\mathfrak{S}_{2}$ denotes the Hilbert-Schmidt class of linear
operators $A$:
\begin{equation*}
\mathfrak{S}_2= \left\{A:\norm{A}_{\mathfrak{S}_{2}}:=%
\left[\operatorname{tr}\left( A^{\ast }A\right)\right]^{1/2}<\infty \right\}
\end{equation*}%
while $\mathfrak{S}_1$ is the trace class
\begin{equation*}
\mathfrak{S}_1= \left\{A:\norm{A}_{\mathfrak{S}_{1}}:=%
\operatorname{tr}\left[\left( A^{\ast }A\right)^{1/2}\right]<\infty \right\}.
\end{equation*}%

$\Spec \left( A\right) $ stands for the spectrum of an operator $A$ and if it is selfadjoint, $\Spec_{ac}\left( A\right)$, $\Spec_d(A)$ denote, respectively, the absolutely continuous and discrete components of $\Spec(A)$.

The following portion of notation will be used extensively in reference to the potential $q$ and quantities associated with it. If $\chi _{S}\left( x\right) $ is the
characteristic function of a set $S$, i.e. $\chi _{S}\left( x\right) =1,x\in
S\,$\ and $0$ otherwise, then we define:
$$q_+(x) := q(x) \chi_{\Reals_+} \;,\; q_-(x)=q(x)\chi_{\Reals_-} \;, \; \tilde{q}(x)=q(x)\chi_{[-a,a]} \text{ for some }a>0. $$

We also denote
$$\delta q := q-\tilde{q} $$
and when the cutoff approximation is taken to infinity, i.e. $a\to\infty$, we write $\delta q \to 0$.

Any quantity $X$ of arbitrary nature (functions,
operators, etc.) related to $\tilde{q}$ will be denoted $\tilde{X}$ and 
\begin{equation*}
\delta X:=X-\widetilde{X}.
\end{equation*}%

\section{Weyl and scattering solutions of the Schr\"odinger equation}

Throughout this section we assume the following. 
\begin{condition}\label{hyp2.1} The potential $q$ is real, locally integrable and such that\footnote{For terminology used without explanation, see e.g. \cite{Deift79,Titch}.}
\begin{enumerate}
\item\label{hyp1} $q$ is limit point case at $-\infty$
\item\label{hyp2} $q_+$ is Faddeev class (short range) and hence 
has the Jost solution at $+\infty$.
\end{enumerate}
\end{condition}

Condition \eqref{hyp1} means that the equation
\begin{equation}\label{star}
-\partial^2_xu+qu=k^2u
\end{equation} 
has a unique, up to a multiplicative constant, solution $\Psi_-$, called Weyl, such that 
$\Psi_-(x,k) \in L^2(\Reals_-)$ for any $k^2\in\C^+$. Condition \eqref{hyp2} implies that $q$ is limit point case at $+\infty$ and the Weyl solution $\Psi_+(x,k)$ can be taken to have the asymptotic behavior:
\begin{equation}\label{psiplusasymp}
\Psi_+(x,k)= e^{ikx}+o(1) \quad,\quad x\to\infty
\end{equation}
for all real\footnote{the Weyl solution $\Psi_+$ coincides in this case with the Jost solution.}
$k$.

In particular, 
we have as in classical scattering theory $\Spec_{ac}(-\partial_x^2+q_+)=\Reals_+$. Furthermore, $\Psi_+,\conj{\Psi}_+$ are both solutions to \eqref{star} for a.e. real $k$ and linearly independent with constant Wronskian
\begin{equation}\label{W1}
W:=W\left(\conj{\Psi_+(x,k)},\Psi_+(x,k)\right)=2ik.
\end{equation} 
So they form a basis of solutions for \eqref{star} for a.e. real $k$ and in particular:
\begin{equation}\label{psiminuspsip}
C(k)\Psi_-(x,k)=\conj{\Psi_+(x,k)}+R(k)\Psi_+(x,k)
\end{equation}
with some $C(k), R(k)$. We call $R$ the reflection coefficient from the right incident. 
Under our hypothesis, neither $C$ nor $R$ can be analytically continued into the upper half plane. Figure \ref{fig1} illustrates a potential from our hypothesis and the asymptotic behavior of $C\Psi_-$ at $+\infty$.
\begin{figure}[htb]
\begin{tikzpicture}[scale=0.6]

\def \qminus {(-8.3,3.5) to[out=40,in=180] (-7.2,3.3) to[out=0,in=180] (-6.2,3.6) .. controls +(1.5,0) and +(-1,0) .. (-3.8,-1) .. controls +(1,0) and +(-1,0) .. (-2.1,3.7) to[out=0,in=165] (0,2.7) }

\def \qplus { (0,2.7) to[out=-15,in=150] (1.5,2.2) to[out=-30,,in=160] (3.3,1.2) to[out=-20,in=135] (5.3,0.6) to[out=-45,in=180] (6.1,-0.3) to[out=0,in=180] (7.4,0.4) to[out=0,in=180] (10,0) }

\def \qplusshaded { \qplus -| (0,2.7)}


\begin{scope}
\clip \qplusshaded;
\foreach \y in {-2,-1.6,...,10}
\draw[domain=0:10,help lines] plot (\x,{8/5*(\x-\y)});
\end{scope}
\draw[help lines] (-8.5,0) -- (10,0)
                  (0,-1) -- +(0,6);
\draw (0,0) node[below left] {$0$};
\draw (-8,-1.5) node[right] {$q_-=q|_{\Reals_-}$ is unknown}; 
\draw (2,-1.5) node[right] {$q_+=q|_{\Reals_+}$ is known};
\draw[thick,blue] 
\qminus \qplus;

\draw[semithick,->] (7,2.3)  -- (4,2.3) arc (270:90:0.5cm) -- (6.5,3.3) node[above] {$e^{-ik x}+R(k)e^{ik x}+o(1)$};

\end{tikzpicture}
\caption{Scattering channels for $q=q_-+q_+$}
\label{fig1}
\end{figure}
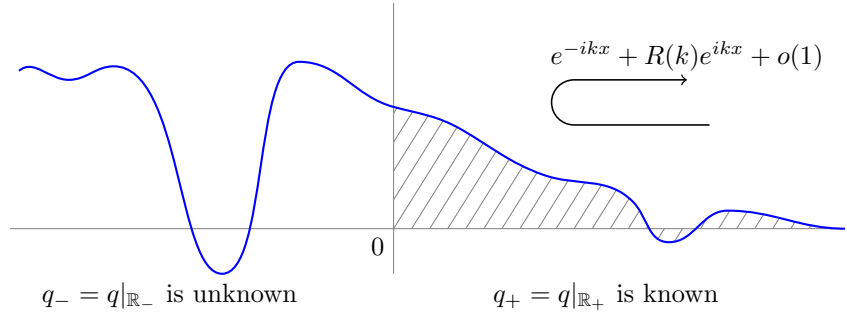

We now consider separately scattering solutions corresponding to $q_\pm$. I.e., first by our hypothesis at $-\infty$, there is a solution $\varphi_-(x,k)$ to
$$-\partial_x^2u+q_-u=k^2u $$
of the form: ($k\in\Reals$)
\begin{equation*}\varphi_-(x,k)=\begin{cases}
D(k)\Psi_-(x,k)\quad&,\quad x<0\\
e^{-ikx}+R_-(k)e^{ikx}\quad&,\quad x\ge0
\end{cases}
\end{equation*}
with some $D(k),R_-(k)$ (see Figure \ref{fig2}). 

\begin{figure}[htb]
\begin{tikzpicture}[scale=0.6]

\def \qminus {(-8.3,3.5) to[out=40,in=180] (-7.2,3.3) to[out=0,in=180] (-6.2,3.6) .. controls +(1.5,0) and +(-1,0) .. (-3.8,-1) .. controls +(1,0) and +(-1,0) .. (-2.1,3.7) to[out=0,in=165] (0,2.7) }

\draw[help lines] (-8.5,0) -- (10,0)
                  (0,-1) -- +(0,6);
\draw (0,0) node[below left] {$0$};
\draw (-8,-1.5) node[right] {$q_-$}; 
\draw[thick,blue] 
\qminus
(0,0)-- +(10,0);

\draw[semithick,->] (7,2.3)  -- (4,2.3) arc (270:90:0.5cm) -- (6.5,3.3) node[above] {$e^{-ik x}+R_-(k)e^{ik x}\phantom{+o(1)}$} ;

\end{tikzpicture}
\caption{Scattering channels for $q_-$}
\label{fig2}
\end{figure}

For $q_+$, there exist particular Jost solutions $\varphi_{\ell,+}$ and ${\varphi}_{r,+}$ to
$$ -\partial_x^2u+q_+u=k^2u$$
such that: ($k\in\Reals$)
\begin{align*}
T_+(k){\varphi}_{\ell,+}(x,k)&=\begin{cases}
e^{ikx}+L_+(k)e^{-ikx}\phantom{\dfrac{1}{2}}\qquad\quad\quad&,\quad x<0\\
T_+(k)\Psi_+(x,k)\qquad\quad&,\quad x\ge0
\end{cases} \\
T_+(k)\varphi_{r,+}(x,k)&=\begin{cases}
T_+(k)e^{-ikx}\phantom{\dfrac{1}{2}}\quad&,\quad x<0\\
\conj{\Psi_+(x,k)}+R_+(k)\Psi_+(x,k)\quad&,\quad x\ge0
\end{cases}
\end{align*}
for $k$ real and where $T_+$ is called the transmission coefficient and $L_+,R_+$ the reflection coefficients from the left (right) incident. Because $q_+$ is short range, $T_+$ can be analytically continued in the upper half plane and has only a finite number of poles $\{i\varkappa_n^+\}_{n=1}^N$. Since, in addition, $q_+$ is supported on the right half line, $L_+$ can also be analytically extended to the upper half plane and shares the same poles as $T_+$. However, in general, $R_+$ can not be extended off the real axis. The asymptotic behavior of these solutions is illustrated in Figure \ref{fig3}.

\begin{figure}[htb]
\begin{tikzpicture}[scale=0.6]

\def \qminus {(-8.3,3.5) to[out=40,in=180] (-7.2,3.3) to[out=0,in=180] (-6.2,3.6) .. controls +(1.5,0) and +(-1,0) .. (-3.8,-1) .. controls +(1,0) and +(-1,0) .. (-2.1,3.7) to[out=0,in=165] (0,2.7) }

\def \qplus { (0,2.7) to[out=-15,in=150] (1.5,2.2) to[out=-30,,in=160] (3.3,1.2) to[out=-20,in=135] (5.3,0.6) to[out=-45,in=180] (6.1,-0.3) to[out=0,in=180] (7.4,0.4) to[out=0,in=180] (10,0) }

\def \qplusshaded { \qplus -| (0,2.7)}

\begin{scope}
\clip \qplusshaded;
\foreach \y in {-2,-1.6,...,10}
\draw[domain=0:10,help lines] plot (\x,{8/5*(\x-\y)});
\end{scope}
\draw[help lines] (-8.5,0) -- (10,0)
                  (0,-1) -- +(0,6);
\draw (0,0) node[below left] {$0$};
\draw (2,-1.5) node[right] {$q_+$};
\draw[thick,blue] 
 \qplus
 (-8.5,0)-- (0,0);

\draw[semithick,->] (9.5,2.3)  -- (6.5,2.3) arc (270:90:0.5cm) -- (9,3.3) node[above left] {$e^{-ik x}+R_+(k)e^{ik x}+o(1)$};
\draw[semithick,->] (6.5,2.3) -- +(-1.5,0) node[above left] {$T_+(k)e^{-ikx}$} ;

\draw[semithick,->] (-8,2.3)  -- (-5,2.3) arc (-90:90:0.5cm) -- (-7.5,3.3) node[above] {$e^{ik x}+L_+(k)e^{-ik x}$} ;
\draw[semithick,->] (-5,2.3) -- +(1.5,0) node[above right] {$T_+(k)e^{ikx}$} ;
\draw (-2,2.2) node[right] {$+o(1)$};

\end{tikzpicture}
\caption{Scattering channels for $q_+$}
\label{fig3}
\end{figure}
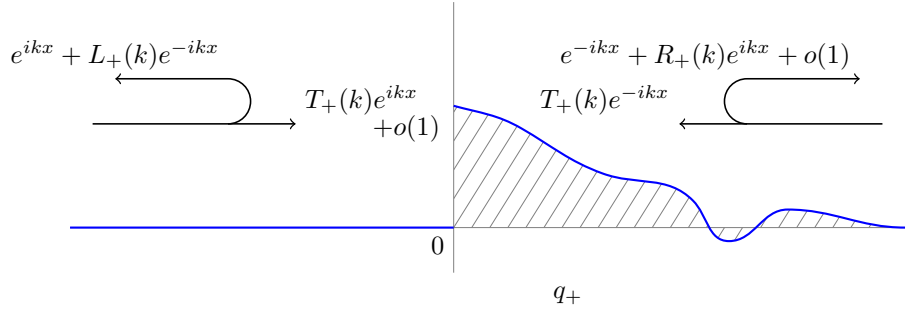

Note that all right reflection and transmission coefficients can be expressed in terms of Wronskians. Of particular interest, we have for a.e. real $k$:
\begin{align*}
W&=2ik=W (\Psi_+(x,k),C(k)\Psi_-(x,k))  \\
R(k) &= \frac{W\left(\conj{\Psi_+(x,k)},C(k)\Psi_-(x,k)\right)}{W} \\
R_-(k) &= \left. \frac{W\left(e^{-ikx},\varphi_-(x,k)\right)}{W} \right|_{x \ge0}\\ 
R_+(k) &=\left.\frac{W\left(\conj{\Psi_+(x,k)},T_+(k)\varphi_{r,+}(x,k)\right)}{W} \right|_{x\ge0}\\
\frac{1}{T_+(k)}&= \frac{W\left(\varphi_{r,+}(x,k),{\varphi}_{\ell,+}(x,k)\right)}{W}
\end{align*}

Note that any truncation $\tilde{q}=q|_{[-a,a]}$ 
is compactly supported which implies that $\widetilde{R},\widetilde{R}_+$ can be analytically continued into $\C^+$ \cite{Deift79} except at a finite number of poles. These poles are located on $i\Reals_+$ such that their squares correspond respectively to the discrete spectrum of $-\partial_x^2+\tilde{q}$ and $-\partial_x^2+\tilde{q}_+$. 
 
\section{The Titchmarsh-Weyl $m$-function}

In this section, we review properties of the Titchmarsh-Weyl $m$-function which is the logarithmic derivative of the Weyl solutions $\Psi_\pm(x,k)$ as $x\to\pm0$. It will be a central object in redefining scattering quantities in the next section. 
We will have to impose some additional conditions on the potential $q$. Most of the material already appeared in \cite{RybIP09} but are repeated here for the reader's convenience.

\begin{definition} The Titchmarsh-Weyl $m$-function is defined by:
$$m_{\pm}(k^2)= \left.\pm \frac{\partial_x \Psi_{\pm}(x,k)}{\Psi_{\pm}(x,k)} \right|_{x=\pm0}$$
\end{definition}

Some of the important properties of the $m$-function are (see e.g. \cite{Ramm,RybIP09}):
\begin{itemize}
\item $m_\pm$ is analytic for all $k^2\in\C^+$ and has the Herglotz property, i.e. $m_\pm:\C^+ \to \C^+$
\item symmetry $m_\pm(\conj{z})=\conj{m_\pm(z)}$
\item the singularities of $m_\pm$ correspond to the spectrum of the half line Dirichlet Schr\"odinger operator, i.e. $-\partial_x^2+q$ on $\Reals_\pm$ with $u(\pm0)=0$.
\item the Borg-Marchenko uniqueness theorem: $m_1=m_2 \;\Rightarrow\; q_1=q_2$. 
\end{itemize}

The following representation of $m_\pm$ will be useful.

\begin{proposition}\label{pr3.1}
Let $q$ be a real function on $\Reals$  
such that $q \in \ell^\infty(L^2(\Reals_-))\cap L^1(\Reals_+)$. 
Let $\gamma=\max(\gamma_-,\gamma_+)$ where 
\begin{align*}
\gamma_- &= \max \left( \sqrt{2\norm{q_-}_{\ell^\infty(L^2(\Reals_-))}}, e\norm{q_-}_{\ell^\infty(L^2(\Reals_-))}\right),\\
\gamma_+ &= \frac{\norm{q_+}_{L^1(\Reals_+)}}{2}.
\end{align*}
Then for $k=\alpha+ih$, $h>\gamma$,
\begin{equation}\label{mA}
m_\pm(k^2)=ik\mp\int^{\pm\infty}_0e^{\pm2ikx}A_\pm(x)dx
\end{equation}
with some real function $A_\pm(x)$, called the $A$-amplitude. 
The integral in \eqref{mA} is absolutely convergent and the $A$-amplitude has the following properties.
\begin{enumerate}
\item $A_\pm-q_\pm$ is continuous on $\Reals_\pm$ and for $\pm x > 0$:
\begin{equation}\label{Amq}
\abs{A_\pm(x)-q_\pm(x)} \le \left( \pm\int^x_0 \abs{q_\pm(s)}ds\right)^2e^{\pm2\gamma x}
\end{equation}
\item\label{it1} If $q_1,q_2 \in \ell^\infty(L^2(\Reals_-))\cap L^1(\Reals_+)$ then 
\begin{equation}\label{q1q2}
q_1(x)=q_2(x) \text{ on }[0,\pm a] 
\quad \Rightarrow\quad A_1(x)=A_2(x)\text{ on }[0,\pm a]. 
\end{equation}
\item\label{it3} For any $h>\gamma$,
$$\norm{e^{\mp2hx}A_\pm(x)}_{L^1(\Reals_\pm)} \le C(h,q_\pm) <\infty$$
and $C(h,q_\pm)$ is a nonincreasing function of $h$.
\item\label{it4} For any $h>\gamma$,
$$e^{2hx}A_-(x) \in L^2(\Reals_-).$$
\end{enumerate}
\end{proposition}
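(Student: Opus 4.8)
The plan is to convert the evaluation of $m_\pm$ into a scalar Volterra integral equation for the logarithmic derivative of the Weyl solution, solve that equation by a Neumann iteration that converges for $h>\gamma$, and read off $A_\pm$ from the resulting series. Set $p_\pm(x):=\partial_x\Psi_\pm(x,k)/\Psi_\pm(x,k)$; since $\Psi_\pm(\cdot,k)$ has no zeros for $k^2\in\C^+$ (a zero would be a nonreal Dirichlet eigenvalue), $p_\pm$ is well defined and analytic there and $m_\pm(k^2)=\pm p_\pm(\pm0)$. As $\Psi_\pm$ solves \eqref{star}, $p_\pm$ obeys the Riccati equation $p_\pm'=q_\pm-k^2-p_\pm^2$ on $\Reals_\pm$; writing $p_\pm=\pm ik+w_\pm$, using \eqref{psiplusasymp} at $+\infty$ and the $L^2(\Reals_-)$ decay of $\Psi_-$ at $-\infty$ to kill the boundary term (on $\Reals_-$ this requires an a priori estimate forcing $e^{2ikx}w_-(x)\to0$, the first place $q_-\in\ell^\infty(L^2(\Reals_-))$ and $h$ large are used), and integrating, one obtains
\begin{gather*}
w_\pm(x)=-\int_x^{\pm\infty}e^{\pm2ik(t-x)}\bigl(q_\pm(t)-w_\pm(t)^2\bigr)\,dt,\\
m_\pm(k^2)=ik\mp\int_0^{\pm\infty}e^{\pm2ikt}\bigl(q_\pm(t)-w_\pm(t)^2\bigr)\,dt,
\end{gather*}
with the convention $\int_x^{-\infty}:=-\int_{-\infty}^x$.

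Next I would iterate the Volterra equation from $w^{(0)}=0$, getting $w_\pm=\sum_{n\ge1}w_\pm^{(n)}$ with each $w_\pm^{(n)}$ an $n$-fold iterated integral of $q_\pm$ against exponentials. Substituting into the second identity, grouping the terms of $q_\pm-w_\pm^2$ by the total number of factors of $q_\pm$, and in each term changing variables so that the accumulated exponential becomes $e^{\pm2ik\tau}$, one rewrites $\mp\int_0^{\pm\infty}e^{\pm2ikt}(q_\pm-w_\pm^2)\,dt$ as $\mp\sum_{n\ge1}\int_0^{\pm\infty}e^{\pm2ik\tau}A_\pm^{(n)}(\tau)\,d\tau$, where $A_\pm^{(1)}=q_\pm$ and, for $n\ge2$, $A_\pm^{(n)}(\tau)$ is a finite iterated integral of $q_\pm$ over a bounded region inside $\{(u_1,\dots,u_n):\abs{u_j}\le\abs{\tau}\}$; in particular $A_\pm^{(n)}(\tau)$ depends on $q_\pm$ only through $q_\pm|_{[0,\tau]}$. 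One then \emph{defines} $A_\pm:=\sum_{n\ge1}A_\pm^{(n)}$, shows the series converges in the weighted spaces occurring in the third and fourth assertions once $h>\gamma$, and interchanges sum and integral by dominated convergence to obtain \eqref{mA}; $A_\pm$ is real because $q_\pm$ is, and the per-term locality just recorded yields \eqref{q1q2} at once. (Alternatively, \eqref{q1q2} follows from the bound $\abs{m_1(k^2)-m_2(k^2)}\le Ce^{-2ha}$, uniform in $\Re k$ as $h\to\infty$ — obtained by integrating the linear difference equation for $w_1-w_2$ across the interval where $q_1=q_2$ and applying Gronwall — combined with a Paley--Wiener inversion of \eqref{mA}.)

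The quantitative assertions reduce to estimating the series. For \eqref{Amq} one uses $A_\pm-q_\pm=\sum_{n\ge2}A_\pm^{(n)}$ together with term-by-term bounds: on $\Reals_+$, of the type $\abs{A_+^{(n)}(\tau)}\le\frac{\abs{\tau}^{\,n-2}}{(n-2)!}\bigl(\int_0^\tau\abs{q_+(s)}\,ds\bigr)^{n}$, and on $\Reals_-$ a Cauchy--Schwarz-on-unit-intervals variant trading one factor $\int_0^\tau\abs{q_-(s)}\,ds$ for $\norm{q_-}_{\ell^\infty(L^2(\Reals_-))}$; summing these yields the right-hand side of \eqref{Amq}, $\bigl(\pm\int_0^\tau\abs{q_\pm(s)}\,ds\bigr)^2 e^{\pm2\gamma\tau}$, and the precise form of $\gamma_-$ (the constants $\sqrt2$ and $e$ included, the $e$ coming from $(n!)^{-1}\le(e/n)^{n}$) is exactly what makes that summation converge. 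The third and fourth assertions then follow from the $n=1$ term and \eqref{Amq}: $q_+\in L^1(\Reals_+)$ gives $e^{-2hx}q_+\in L^1(\Reals_+)$, while $q_-\in\ell^\infty(L^2(\Reals_-))$ gives $\int_{-\infty}^0 e^{4hx}\abs{q_-(x)}^2\,dx\le\norm{q_-}_{\ell^\infty(L^2(\Reals_-))}^2/(1-e^{-4h})<\infty$, and the remainder $A_\pm-q_\pm$ is exponentially smaller since $h>\gamma$; the monotonicity of $C(h,q_\pm)$ in $h$ is automatic because $e^{\mp2h'x}\le e^{\mp2hx}$ on $\Reals_\pm$ for $h'>h$.

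I expect the real difficulty to lie at $-\infty$: since $q_-$ is only assumed to be in $\ell^\infty(L^2(\Reals_-))$, with no prescribed behavior there, one must carefully (i) produce the Weyl solution $\Psi_-$ and verify that its logarithmic derivative solves the Volterra equation above with vanishing boundary term at $-\infty$, and (ii) push the convergence estimates for the series through in the weighted $L^2(\Reals_-)$ setting with the sharp threshold $h>\gamma_-$ rather than a crude one, extracting everything from the $\ell^\infty(L^2)$ norm; this is essentially the analysis carried out (for $q_+\equiv0$) in \cite{RybIP09}, and the bookkeeping behind the explicit constant in \eqref{Amq} is the most technical part.
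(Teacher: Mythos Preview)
Your proposal is essentially correct, but it does far more work than the paper does. The paper's proof is mostly a citation: the representation \eqref{mA} and properties \eqref{Amq}--\eqref{q1q2} are quoted directly from Simon \cite{Simon99} (for $q_+\in L^1(\Reals_+)$) and Avdonin--Mikhaylov--Rybkin \cite{AMR07} (for $q\in\ell^\infty(L^1)$), with the observation $m_-(q_-(x),k^2)=m_+(q_-(-x),k^2)$ transferring the $\Reals_+$ statements to $\Reals_-$. Only \eqref{it3}--\eqref{it4} are argued from scratch, and there the paper does exactly what you do at the end: split $A_\pm=q_\pm+(A_\pm-q_\pm)$, control $e^{\mp2hx}q_\pm$ in $L^p(\Reals_\pm)$ by summing over unit intervals, and control the remainder using the already-established bound \eqref{Amq} together with the linear-growth estimate $\int_x^0\abs{q_-}\le(1-x)\norm{q_-}_{\ell^\infty(L^2(\Reals_-))}$; explicit constants $C(h,q_\pm)$ are then read off and their monotonicity is visible.

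What you are proposing---the Riccati equation for $w_\pm$, Neumann iteration, term-by-term locality, and summation to recover the exact $\gamma_\pm$---is essentially the route by which \cite{Simon99} and \cite{AMR07} established \eqref{mA}--\eqref{q1q2} in the first place. So your plan is not wrong, but it re-derives the cited literature rather than using it; the genuine new content of the proposition is only \eqref{it3}--\eqref{it4}, and for that part your argument and the paper's coincide. The caveat you flag at the end (justifying the vanishing boundary term for $w_-$ at $-\infty$ and extracting the sharp constant $\gamma_-$ from $\ell^\infty(L^2)$ data) is precisely the nontrivial analysis buried in \cite{AMR07}; the paper sidesteps it by citation.
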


\begin{proof} The representation \eqref{mA} appeared in \cite{Ramm} for short range $q$'s and in \cite{Simon99} for more general $q$'s. 
Properties \eqref{Amq}-\eqref{q1q2} were derived for $q_+ \in L^1(\Reals_+)$ in \cite{Simon99} then for $q_+ \in \ell^\infty(L^1(\Reals_+))$ in \cite{AMR07} but since
$$m_-(q_-(x),k^2)=m_+(q_-(-x),k^2) $$ 
and $\ell^\infty(L^2(\Reals_-))\subset\ell^\infty(L^1(\Reals_-))$, we have adjusted the results accordingly. So only \eqref{it3}-\eqref{it4} require a proof. We will consider $A_-$ and $p=1,2$. Using Minkowski's inequality, one needs only to show $e^{2hx} q_-(x)$ and $e^{2hx}(A_-(x)-q_-(x))$ are in $L^p(\Reals_-)$. Dropping the subscripts, we have
\begin{align*}
\int_{-\infty}^0 \abs{e^{2hx}q(x)}^pdx &= \int_{-\infty}^0e^{2hpx}\abs{q(x)}^pdx \\
&= \sum_{n=0}^{\infty} \int_{-n-1}^{-n}e^{2hpx}\abs{q(x)}^pdx \\
& \le \sum_{n=0}^\infty e^{-2hpn} \int_{-n-1}^{-n}\abs{q(x)}^pdx \\
& \le \sum_{n=0}^\infty e^{-2hpn} \norm{q(x)}^p_{\ell^\infty(L^p(\Reals_-))} \\
&= \frac{1}{1-e^{-2hp}}\norm{q(x)}^p_{\ell^\infty(L^p(\Reals_-))}.
\end{align*}
For the next term, we will make use of the following: ($x \le 0$)
\begin{align*}
\int_x^0\abs{q(s)}ds & \le \sum_{n=1}^{-\left\lfloor x\right\rfloor}\int_{-n}^{-n+1} \abs{q(s)}ds \le \sum_{n=1}^{-\left\lfloor x\right\rfloor}\norm{q}_{\ell^\infty(L^1(\Reals_-))}\\
&\le (1-x)\norm{q}_{\ell^\infty(L^1(\Reals_-))} \le (1-x)\norm{q}_{\ell^\infty(L^p(\Reals_-))}.
\end{align*}
where the last inequality is a direct consequence of H\"{o}lder's inequality. So,
\begin{align*}
\int_{-\infty}^0\abs{e^{2hx}(A(x)-q(x))}^pdx &= \int_{-\infty}^0e^{2hpx}\abs{A(x)-q(x)}^pdx \\
&\le \int_{-\infty}^0e^{2p(h-\gamma)x} \left(\int_x^0\abs{q(s)}ds\right)^{2p}dx \\
&\le \int_{-\infty}^0e^{2p(h-\gamma)x}(1-x)^{2p}\norm{q}_{\ell^\infty(L^p(\Reals_-))}^{2p}dx.
\end{align*}
One readily verifies that for any $m=0,1,2,\ldots$ and $b>0$
$$\int_{-\infty}^0(1-x)^me^{bx}dx= \frac{m!}{b^{m+1}}\sum_{k=0}^m \frac{b^k}{k!}. $$  
Therefore, we have
\begin{align*}
\int_{-\infty}^0\abs{e^{2hx}(A(x)-q(x))}^pdx &\le \frac{2p!}{\left[2p(h-\gamma)\right]^{2p+1}}\sum_{j=0}^{2p} \frac{\left[2p(h-\gamma)\right]^{j}}{j!}\norm{q}_{\ell^\infty(L^p(\Reals_-))}^{2p}. 
\end{align*}
So \eqref{it3}-\eqref{it4} are verified with
$$C(h,q_-)= \frac{1}{1-e^{-2h}}\norm{q_-(x)}_{\ell^\infty(L^1(\Reals^-))}+\frac{1}{4\left(h-\gamma\right)^{3}}\sum_{j=0}^{2} \frac{\left[2(h-\gamma)\right]^{j}}{j!}\norm{q_-}_{\ell^\infty(L^1(\Reals_-))}^{2}.$$
Similarly for $A_+$, we can take in \eqref{it3} 
\begin{equation*}C(h,q_+)=\norm{q_+}_{L^1(\Reals_+)}+\frac{1}{h-\gamma}\norm{q_+}_{L^1(\Reals_+)}^2 . \qedhere
\end{equation*}
\end{proof}

\begin{remark}\label{rem3.3} In the case of a truncated potential $\tilde{q}$, since $\widetilde{\gamma}_{\pm}\le \gamma_{\pm}$ and $C(h,\tilde{q}_{\pm})\le C(h,q_\pm)$, all above results remain true for the same $h$. If, in addition, $q_+\in L^2_{loc}(\Reals_+)$ then $\tilde{q}_+ \in \ell^\infty(L^2(\Reals_+))$ and thus $e^{-2hx}\widetilde{A}_+(x) \in L^2(\Reals_+)$ for $h$ large enough.
\end{remark}

\begin{corollary}\label{cor3.4} Let $q \in \ell^\infty(L^2(\Reals_-))$, and let $h \ge \gamma_-$ where $\gamma_-$ is defined as in Proposition \ref{pr3.1}. Then
\begin{enumerate}
\item $ik-m_{-}(k^2) \in L^2(\Reals+ih)$,
\item if $\tilde{q}_- = q_-|_{[-a,0]}$ for some $a>0$, then
$$\delta m_-(k^2):= m_-(k^2)-\widetilde{m}_-(k^2) \to 0 \text{ in } L^2(\Reals+ih)\quad,\quad a \to\infty. $$
\end{enumerate}
\end{corollary}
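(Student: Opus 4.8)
The plan is to read both statements off the $A$-amplitude representation \eqref{mA} via Plancherel's theorem. Since $m_-$ depends only on $q|_{\Reals_-}$, I may replace $q$ by $q\chi_{\Reals_-}$ without affecting $m_-$, $\widetilde{m}_-$, or the associated $A$-amplitudes; for this potential $\gamma_+=0$, hence $\gamma=\gamma_-$, and Proposition \ref{pr3.1} applies for every $h>\gamma_-$. (I treat $h>\gamma_-$; the endpoint $h=\gamma_-$, where \eqref{Amq} degenerates, would need a separate limiting argument.) Fix the Fourier transform $\widehat{f}(\xi):=\int_\Reals f(x)e^{-i\xi x}\,dx$. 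Taking the lower sign in \eqref{mA} and writing $k=\alpha+ih$ with $\alpha\in\Reals$,
\begin{equation*}
ik-m_-(k^2)=\int_0^{-\infty}e^{-2ikx}A_-(x)\,dx=-\int_\Reals e^{-2i\alpha x}g_h(x)\,dx=-\widehat{g_h}(2\alpha),\qquad g_h(x):=e^{2hx}A_-(x)\chi_{\Reals_-}(x),
\end{equation*}
where I used $e^{-2ikx}=e^{-2i\alpha x}e^{2hx}$ and that $g_h$ is supported on $\Reals_-$. By Proposition \ref{pr3.1}\eqref{it4}, $g_h\in L^2(\Reals)$, so Plancherel gives $\widehat{g_h}\in L^2(\Reals)$ and, with the substitution $\xi=2\alpha$, $\norm{ik-m_-(k^2)}_{L^2(\Reals+ih)}^2=\pi\norm{g_h}_{L^2(\Reals)}^2<\infty$, which is part (1).

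For part (2), the same computation run for $\widetilde{q}_-$ is legitimate on the same line since $\widetilde{\gamma}_-\le\gamma_-$ by Remark \ref{rem3.3}, and gives $\delta m_-(k^2)=\widehat{(g_h-\widetilde{g}_h)}(2\alpha)$ with $\widetilde{g}_h(x):=e^{2hx}\widetilde{A}_-(x)\chi_{\Reals_-}(x)$. Since $q_-=\widetilde{q}_-$ on $[-a,0]$, Proposition \ref{pr3.1}\eqref{it1} forces $A_-=\widetilde{A}_-$ there, so $g_h-\widetilde{g}_h$ is supported in $(-\infty,-a]$ and Plancherel yields
\begin{equation*}
\norm{\delta m_-}_{L^2(\Reals+ih)}^2=\pi\int_{-\infty}^{-a}e^{4hx}\abs{A_-(x)-\widetilde{A}_-(x)}^2\,dx\le 2\pi\!\int_{-\infty}^{-a}\!\abs{e^{2hx}A_-(x)}^2 dx+2\pi\!\int_{-\infty}^{-a}\!\abs{e^{2hx}\widetilde{A}_-(x)}^2 dx.
\end{equation*}
The first integral tends to $0$ as $a\to\infty$ because $e^{2hx}A_-\in L^2(\Reals_-)$ is a fixed function (Proposition \ref{pr3.1}\eqref{it4}). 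For the second, $\widetilde{q}_-$ vanishes on $(-\infty,-a]$, so there $\widetilde{A}_-=\widetilde{A}_--\widetilde{q}_-$; applying \eqref{Amq} to $\widetilde{q}_-$, together with $\int_{-a}^0\abs{q_-}\le(1+a)\norm{q_-}_{\ell^\infty(L^1(\Reals_-))}\le(1+a)\norm{q_-}_{\ell^\infty(L^2(\Reals_-))}$ (as in the proof of Proposition \ref{pr3.1}) and $\widetilde{\gamma}_-\le\gamma_-$, one finds $\abs{e^{2hx}\widetilde{A}_-(x)}\le(1+a)^2\norm{q_-}_{\ell^\infty(L^2(\Reals_-))}^2 e^{2(h-\gamma_-)x}$ for $x\le-a$; integrating, the second integral is $O\bigl((1+a)^4 e^{-4(h-\gamma_-)a}\bigr)\to0$. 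Hence $\delta m_-\to0$ in $L^2(\Reals+ih)$.

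I expect the main obstacle to be exactly this last step: the amplitude $\widetilde{A}_-$ of the truncated potential changes with $a$, so one needs a far-tail bound on $\widetilde{A}_-$ that is \emph{uniform in} $a$. This is where the pointwise bound \eqref{Amq} is essential — crucially, $\int_x^0\abs{\widetilde{q}_-}$ stabilizes to $\int_{-a}^0\abs{q_-}$ once $x\le-a$ — together with the monotonicity $\widetilde{\gamma}_-\le\gamma_-$, which makes the growing polynomial factor negligible against $e^{-4(h-\gamma_-)a}$ provided $h>\gamma_-$. Everything else reduces to Plancherel's theorem on the horizontal line $\Reals+ih$ and is routine.
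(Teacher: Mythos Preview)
Your argument is correct and follows the paper's approach exactly --- express $ik-m_-$ and $\delta m_-$ through the $A$-amplitude representation \eqref{mA} and apply Plancherel on the line $\Reals+ih$ --- with the bonus that you actually justify the tail decay of $e^{2hx}\widetilde{A}_-$ on $(-\infty,-a]$ via \eqref{Amq}, a point the paper simply asserts. One cosmetic slip: taking the lower sign in \eqref{mA} gives $ik-m_-(k^2)=-\int_0^{-\infty}e^{-2ikx}A_-(x)\,dx=+\widehat{g_h}(2\alpha)$, not $-\widehat{g_h}(2\alpha)$, but this does not affect any norm estimate.
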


\begin{proof} Note that for $k=\alpha+ih$ where $\alpha\in\Reals, h>\gamma_-$,
$$ik-m_{-}(k^2)= \int_{-\infty}^0 e^{-2i\alpha x}e^{2hx}A_-(x)dx $$
where by Proposition \ref{pr3.1}, $e^{2hx}A_-(x)\in L^2(\Reals+ih)$. The Plancherel formula in our case takes the form:
$$\norm{ \int_{\Reals_-} e^{-2i\alpha x} f(x)dx}_{L^2(\Reals)}= \sqrt{\pi} \norm{f(x)}_{L^2(\Reals_-)} $$
and hence
\begin{align*}
\norm{ik-m_-(k^2)}_{L^2(\Reals+ih)}&= \sqrt{\pi}\norm{e^{2hx}A_-(x)}_{L^2(\Reals_-)}<\infty  
\end{align*}

In the case $\tilde{q}_-= q_-|_{[-a,0]}$, we have $\widetilde{\gamma}_-\le \gamma_-$ and by Proposition \ref{pr3.1} \eqref{it1}, 
$$\delta m_-(k^2)= \int_{-\infty}^{-a}e^{-2ik x}\delta A(x)dx$$
and therefore for any $h>\gamma_-$,
\begin{equation*}\norm{\delta m_-(k^2)}_{L^2(\Reals+ih)}= \sqrt{\pi}\norm{e^{2hx}\delta A(x)}_{L^2((-\infty,-a])} \to 0 \quad,\quad a \to\infty. \qedhere
\end{equation*}
\end{proof}

\section{The reflection and transmission coefficients}

In this section we establish some properties of one of our main objects:
\begin{equation*}
G(k):= \Delta R(k)=R(k)-R_+(k).
\end{equation*}

As mentioned when first introduced, neither $R$ nor $R_+$ can be analytically extended to the upper half plane for a potential $q$ under the conditions of Hypothesis \ref{hyp2.1} (or those in Proposition \ref{pr3.1}). But by rewriting $G$ exclusively in terms of $R_-$, $T_+$, and $L_+$, we will see that $G$ can be analytically extended to the upper half plane. We also derive key properties of $R_-,T_+,L_+$ and $G$ in $\C^+$ which will be used later to recover $q_-(x)$ -- assuming $R,R_+$ are known.

First we rewrite the reflection and transmission coefficients in terms of the $m$-function. Setting 
$$m_\pm=m_\pm(k^2+i0):=\lim_{\varepsilon\to0^+}m_\pm(k^2+i\epsilon),$$
we have:
\begin{align}
R(k)\phantom{+}   &= -\frac{m_-+\conj{m}_+}{m_-+m_+}\frac{\conj{\Psi_+(0,k)}}{\Psi_+(0,k)}, \label{R} \\
R_+(k) &=-\frac{\conj{m}_++ik}{m_++ik}\frac{\conj{\Psi_+(0,k)}}{\Psi_+(0,k)}, \label{Rp} \\
R_-(k) &= \frac{ik-m_-}{ik+m_-}, \label{Rm}\\
L_+(k) &=\frac{ik-m_+}{ik+m_+}, \label{Lp}\\
T_+(k) &=\frac{2ik}{(ik+m_+)\Psi_+(0,k)} \label{Tp}.
\end{align}

The above are obtained using continuity of the various solutions and their derivatives in $x$ at the point $x=0$ or, alternately, the Wronskians. Recall that the above are defined for a.e. real $k$. 

From \eqref{Rm}-\eqref{Lp} and properties of the Titchmarsh-Weyl $m$-function, one notes that $L_+,R_-$ have a meromorphic extension to the upper half plane. By Proposition \ref{pr3.1} \eqref{it3}, $L_+,R_-$ are smooth on $\Reals+ih$ for any $h>\gamma$. Furthermore, by the Borg-Marchenko uniqueness theorem $L_+,R_-$ determine uniquely respectively $q_\pm(x)$ \cite{Ramm}. 

Now, using \eqref{W1} and \eqref{R}-\eqref{Tp}, one readily verifies that:
\begin{equation}\label{G}
G(k)= \frac{T_+^2(k)R_-(k)}{1-L_+(k)R_-(k)}
\end{equation}
and thus $G$ can be analytically extended to the upper half plane (recall that $T_+$ also has an analytic extension since $q_+$ is short range \cite{Deift79}).

\begin{proposition}\label{pr4.1}Let $q$ be as in Proposition \ref{pr3.1} and let 
$$ h_\pm= \inf \left\{h\; : \; h>\gamma_\pm \text{ and } C(h,q_{\pm})<\frac{h}{2}\right\}$$
Then for all $h>\max(h_+,h_-)$ 
\begin{enumerate}
\item $R_-,L_+ \in \left(L^\infty\cap L^2\right)(\Reals+ih)$ with their $L^\infty$ norm no greater than $1/3$.
\item $\delta L_+ \; \to \; 0$ in $L^\infty(\Reals+ih)$ when $a\to\infty$. 
\item $R_- \in L^1(\Reals+ih)$.
\item $\delta R_- \; \to \; 0$ in $L^1(\Reals+ih)$.
\end{enumerate}
\end{proposition}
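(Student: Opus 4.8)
The plan is to read everything off the explicit formulas \eqref{Rm}, \eqref{Lp}, \eqref{Tp}, \eqref{G} together with the $m$-function bounds from Proposition \ref{pr3.1} and Corollary \ref{cor3.4}. The starting point is that on the line $\Reals+ih$ we have $m_\pm(k^2)=ik\mp\int_0^{\pm\infty}e^{\pm2ikx}A_\pm(x)\,dx$, so $|m_\pm(k^2)-ik|\le C(h,q_\pm)$ by property \eqref{it3} of Proposition \ref{pr3.1} (integrating the absolute value of the integrand). Writing $R_-=(ik-m_-)/(ik+m_-)$ as
\begin{equation*}
R_-(k)=\frac{-(m_--ik)}{2ik+(m_--ik)},
\end{equation*}
the numerator is bounded by $C(h,q_-)$ and the denominator is bounded below by $2|k|-C(h,q_-)\ge 2h-C(h,q_-)$. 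By the very definition of $h_-$, for $h>h_-$ we have $C(h,q_-)<h/2$, so $|R_-|\le (h/2)/(2h-h/2)=1/3$ uniformly on the line; the same computation with $m_+$ gives $|L_+|\le 1/3$ for $h>h_+$. This proves the $L^\infty$ part of (1). For the $L^2$ and $L^1$ parts I would use that, again on $\Reals+ih$, $ik-m_-(k^2)=\int_{-\infty}^0 e^{-2i\alpha x}e^{2hx}A_-(x)\,dx$ is, by Corollary \ref{cor3.4}, in $L^2(\Reals+ih)$ (it is a Fourier transform of an $L^2$ function, via Plancherel), while $e^{2hx}A_-(x)\in L^1(\Reals_-)$ by property \eqref{it3}, so $ik-m_-\in L^\infty\cap L^2\cap(\text{Fourier transform of }L^1)$. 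Since $R_-=(ik-m_-)\cdot\frac{1}{ik+m_-}$ with the second factor smooth and bounded on the line (bounded below in modulus by $2h-C(h,q_-)>0$), $R_-\in L^2$; to get $R_-\in L^1$ in (3), I would note that $\frac{1}{ik+m_-}=\frac{1}{2ik}\cdot\frac{1}{1+(m_--ik)/(2ik)}$ can be expanded in a geometric series in $(m_--ik)/(2ik)$, whose ratio has sup-norm $\le C(h,q_-)/(2h)<1/4$, so $\frac{1}{ik+m_-}=\frac{1}{2ik}\sum_{j\ge0}\bigl(-(m_--ik)/(2ik)\bigr)^j$; each term is (after one integration by parts to handle the $1/k$) a Fourier transform of an $L^1$ function, with $\ell^1$-summable $L^1$ norms, hence $R_-\cdot 2ik$ and then $R_-$ itself land in the image of $L^1$ under Fourier transform, giving $R_-\in L^1(\Reals+ih)$. (Equivalently: $R_-$ is, up to the analytic-in-$\C^+$ decaying factor, a reflection coefficient of the truncated-at-$-\infty$-type problem and one can cite the convolution-algebra structure.)

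For the $\delta$-statements (2) and (4) the mechanism is the same but driven by Corollary \ref{cor3.4}(2): $\delta m_-(k^2)=\int_{-\infty}^{-a}e^{-2ikx}\delta A(x)\,dx\to0$ in $L^2(\Reals+ih)$, and moreover $e^{2hx}\delta A(x)\to0$ in $L^1((-\infty,-a])$ as $a\to\infty$ (this is exactly property \eqref{it3} applied to $\delta q$, or monotone/dominated convergence since the tail of a fixed $L^1$ function vanishes). Then
\begin{equation*}
\delta L_+ = \frac{ik-m_+}{ik+m_+}-\frac{ik-\widetilde m_+}{ik+\widetilde m_+}
= \frac{(ik-m_+)(ik+\widetilde m_+)-(ik-\widetilde m_+)(ik+m_+)}{(ik+m_+)(ik+\widetilde m_+)}
= \frac{2ik\,(\widetilde m_+-m_+)}{(ik+m_+)(ik+\widetilde m_+)},
\end{equation*}
and since $|ik+m_+|\ge 2h-C(h,q_+)$ and likewise for the tilde (using $\widetilde\gamma_+\le\gamma_+$ and $C(h,\widetilde q_+)\le C(h,q_+)$ from Remark \ref{rem3.3}, so $h>h_+$ works for both), the denominator is bounded below by a positive constant times $|k|^2$, hence $|\delta L_+|\le \mathrm{const}\cdot|k|^{-1}|\delta m_+|$, so $\|\delta L_+\|_{L^\infty}\lesssim \|\delta m_+\|_{L^\infty}\to0$; here I need $\|\delta m_+\|_{L^\infty(\Reals+ih)}\le \|e^{-2hx}\delta A_+\|_{L^1(\Reals_+)}\to0$, the $L^1$-tail statement for $q_+$, which follows from Proposition \ref{pr3.1}\eqref{it3} applied to $\delta q_+$. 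For (4), the identical algebra gives $\delta R_-=2ik(\widetilde m_--m_-)/\bigl((ik+m_-)(ik+\widetilde m_-)\bigr)$; writing each $\frac{1}{ik+m_-}$ via its geometric series as above and $\widetilde m_--m_-=-\delta m_-$ as a Fourier transform of $e^{2hx}\delta A$, the product is a convergent product/sum of Fourier transforms of $L^1$ functions, and the total $L^1(\Reals_-)$ mass of the relevant kernel is controlled by $\|e^{2hx}\delta A\|_{L^1}\to0$ times bounded factors, so $\|\delta R_-\|_{L^1(\Reals+ih)}\to0$.

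The step I expect to be the real obstacle is (3), i.e. genuinely landing $R_-$ in $L^1(\Reals+ih)$ rather than just $L^2$, and correspondingly (4): the factor $1/(ik+m_-)$ decays only like $1/|k|$, which is not integrable, so one cannot naively multiply $L^1$-estimates and must exploit cancellation. The clean way is to make precise that, on the horizontal line, all these quantities live in the Wiener-type convolution algebra $\{\,c/(2ik)+\widehat{g}:\ g\in L^1(\Reals_-),\ c\in\C\,\}$ (or: $R_-$ times $2ik$ is the Fourier transform of an $L^1$ function), show this algebra is closed under the operations used (sums, products, and inversion of elements bounded away from $0$, the latter via the geometric series whose sup-norm ratio is $<1/2$ precisely because $h>h_-$), and observe that $R_-=(ik-m_-)/(ik+m_-)$ is obtained from $ik-m_-\in\widehat{L^1}$ and $(ik+m_-)^{-1}$ by one multiplication — but one must be careful that $R_-$ itself, not just $2ik\,R_-$, is integrable, which is where the extra decay of $ik-m_-$ (it is $\widehat{e^{2h\cdot}A_-}$ with $e^{2h\cdot}A_-\in L^1$, vanishing at $\pm\infty$) exactly compensates the $1/|k|$. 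Managing this bookkeeping, and the analogous one with the $\delta$'s where one additionally needs the tail norm $\|e^{2hx}\delta A\|_{L^1}\to0$ to propagate through the algebra operations with uniformly bounded multipliers, is the technical heart of the proposition; the $L^\infty$ bound $\le1/3$ in (1) and the $L^2$ statements are comparatively immediate from the $m$-function estimates already in hand.
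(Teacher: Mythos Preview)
Your treatment of the $L^\infty$ bounds in (1) and of (2) is correct and essentially identical to the paper's: both compute $|ik-m_\pm|\le C(h,q_\pm)<h/2$, $|ik+m_\pm|\ge 2|k|-C(h,q_\pm)\ge 3|k|/2$, and read off $|R_-|,|L_+|\le 1/3$ and $|\delta L_+|\lesssim |\delta m_+|/|k|\to 0$ in $L^\infty$. (One small gap: your $L^2$ argument for $R_-$ uses $ik-m_-\in L^2$ times a bounded factor, which is fine, but the same reasoning does not give $L_+\in L^2$ since Proposition~\ref{pr3.1}\eqref{it4} is only stated for $A_-$. The paper instead observes that $1/(ik+m_\pm)\in L^2(\Reals+ih)$ because $|1/(ik+m_\pm)|\le \tfrac{2}{3|k|}$ and $1/|k|\in L^2$ on the shifted line; this handles both $R_-$ and $L_+$ at once.)

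The real problem is your approach to (3) and (4). You argue that $R_-$ (and $\delta R_-$) lies in the image of the Fourier transform of $L^1$, and then conclude ``giving $R_-\in L^1(\Reals+ih)$''. That implication is false: $\widehat{L^1}\subset C_0$, not $L^1$. A Wiener-algebra argument will only place $R_-$ in $C_0\cap L^\infty$, never in $L^1$, so the whole geometric-series/convolution-algebra machinery you propose---however carefully executed---proves the wrong statement. The same error recurs in your treatment of (4), where you control ``the total $L^1(\Reals_-)$ mass of the relevant kernel'' and conclude $\|\delta R_-\|_{L^1(\Reals+ih)}\to 0$; again, small $L^1$ norm of the kernel gives small $L^\infty$ norm of its transform, not small $L^1$ norm.

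The paper's route is far simpler and avoids this entirely: it is a one-line Cauchy--Schwarz. Since $|ik+m_-|\ge \tfrac{3}{2}|k|$, one has $\tfrac{1}{ik+m_-}\in L^2(\Reals+ih)$; combined with $ik-m_-\in L^2(\Reals+ih)$ from Corollary~\ref{cor3.4},
\[
\|R_-\|_{L^1(\Reals+ih)}\le \Bigl\|\tfrac{1}{ik+m_-}\Bigr\|_{L^2(\Reals+ih)}\cdot\|ik-m_-\|_{L^2(\Reals+ih)}<\infty,
\]
which is (3). For (4) one writes $\delta R_-=\dfrac{-2ik\,\delta m_-}{(ik+m_-)(ik+\widetilde m_-)}$ and applies the same splitting:
\[
\|\delta R_-\|_{L^1}\le \Bigl\|\tfrac{2ik}{ik+\widetilde m_-}\Bigr\|_{L^\infty}\cdot\Bigl\|\tfrac{1}{ik+m_-}\Bigr\|_{L^2}\cdot\|\delta m_-\|_{L^2}\;\longrightarrow\;0
\]
by Corollary~\ref{cor3.4}(2). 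So the ``real obstacle'' you anticipated dissolves once you notice that \emph{both} factors of $R_-$ live in $L^2$, and $L^2\cdot L^2\subset L^1$.
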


\begin{proof} The above results are direct consequences of Proposition \ref{pr3.1}. 
For all $k\in\Reals+ih$, such that $h>\max(h_+,h_-)$,
\begin{align*}
\abs{ik-m_\pm(k^2)} & = \abs{\int_0^{\pm\infty}e^{\pm 2ikx}A_\pm(x)dx}\\
& \le \norm{e^{\mp2hx}A_\pm(x)}_{L^1(\Reals_\pm)} \le \frac{h}{2} \\
\abs{ik+m_\pm(k^2)} &= \abs{2ik\mp\int_0^{\pm\infty}e^{2ikx}A_\pm(x)dx} \\
&\ge2\abs{k} \cdot\abs{1 - \frac{1}{2\abs{k}}\abs{ \int_0^{\pm\infty}e^{\pm2ikx}A_{\pm}(x)dx}\;} \\
&\ge \frac{3\abs{k}}{2} 
\end{align*}
where we have used $\abs{k} \ge h$. Thus $\frac{1}{ik+m_\pm(k^2)}\in L^2(\Reals+ih)$ and it follows immediately from \eqref{Rm}-\eqref{Lp} that $\norm{R_-(k)}_{L^\infty(\Reals+ih)}, \norm{L_+(k)}_{L^\infty(\Reals+ih)} \le \frac{1}{3}$ and $L_+,R_-\in L^2(\Reals+ih)$.

We further obtain $R_-\in L^1(\Reals+ih)$ by the Cauchy-Schwartz inequality for $h>h_-$:
\begin{equation*}
\norm{R_-(k)}_{L^1(\Reals+ih)} \le \norm{\frac{1}{ik+m_-(k^2)}}_{L^2(\Reals+ih)}\cdot\norm{ik-m_-(k^2)}_{L^2(\Reals+ih)}. 
\end{equation*}
By Remark \ref{rem3.3}, the above is also true for $\widetilde{R}_-,\widetilde{L}_+$.

Now for all $k\in\Reals+ih$ with $h>h_+$,
\begin{align*}
\abs{\delta L_+(k)} &= \abs{ \frac{-2ik \delta m_+(k^2)}{(ik+m_+(k^2))(ik+\widetilde{m}_+(k^2))}} \leq \frac{8}{9h} \abs{\delta m_+(k^2)} \\
& \le \frac{8}{9h} \norm{e^{-2hx}\delta A_+(x)}_{L^1([a,\infty))} \; \to \; 0 \quad,\quad a \to \infty.
\end{align*}

We also have
$$\delta R_-(k)= \frac{-2ik \delta m_-(k^2)}{(ik+m_-(k^2))(ik+\widetilde{m}_-(k^2))}.$$ 

Using $L^\infty$ norms and the Cauchy-Schwartz inequality:
\begin{align*}
\norm{\delta R_-}_{L^1(\Reals+ih)} &\le \norm{\frac{-2ik}{ik+\widetilde{m}_-(k^2)}}_{L^\infty(\Reals+ih)}\cdot \norm{\frac{\delta m_-(k^2)}{ik+{m}_-(k^2)}}_{L^1(\Reals+ih)}\\
& \le \frac{4}{3}\norm{\frac{1}{ik+{m}_-(k^2)}}_{L^2(\Reals+ih)}\cdot\norm{\delta m_-(k^2)}_{L^2(\Reals+ih)} 
\end{align*}
and the right hand side, by Corollary \ref{cor3.4}, goes to zero when $a\to\infty$.
\end{proof}

\begin{remark} Property \eqref{it3} will play a crucial role. Note that if $q(x)=c\delta(x)$, where $\delta$ is Dirac's $\delta$-function, then  
\begin{equation*}
R(k)= \frac{c}{2ik-c}
\end{equation*}
which is not in $L^1(\Reals+ih)$. This suggests that the condition $\ell^\infty(L^2(\Reals))$ may not be relaxed to read $\ell^\infty(L^1(\Reals))$. 
\end{remark}

\begin{corollary}\label{cor5.3} For any finite $z$, and $q_-,h$ under the conditions of Proposition \ref{pr4.1}, $e^{2ikz}R_-(k) \in L^1(\Reals+ih)$.
\end{corollary}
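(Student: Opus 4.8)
The plan is to obtain the statement as an immediate consequence of Proposition~\ref{pr4.1}(3): I would first recall that $R_-\in L^1(\Reals+ih)$ for $q_-,h$ as in that proposition, and then simply note that multiplying by $e^{2ikz}$ does not disturb integrability along the horizontal contour. Concretely, write $k=t+ih$ with $t\in\Reals$; for real $z$ one has $\abs{e^{2ikz}}=\abs{e^{2itz}}\cdot e^{-2hz}=e^{-2hz}$, a positive constant independent of $t$, so that
\begin{equation*}
\norm{e^{2ikz}R_-(k)}_{L^1(\Reals+ih)}=\int_{\Reals}\abs{e^{2i(t+ih)z}}\,\abs{R_-(t+ih)}\,dt=e^{-2hz}\,\norm{R_-(k)}_{L^1(\Reals+ih)}<\infty,
\end{equation*}
the finiteness coming from Proposition~\ref{pr4.1}(3) together with $z$ being finite (so $e^{-2hz}<\infty$). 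That is essentially the whole argument.

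I do not anticipate any real difficulty here, since the analytic work has already been done. The lower bound $\abs{ik+m_-(k^2)}\ge \tfrac32\abs{k}$ on $\Reals+ih$ (for $h$ beyond the threshold $h_-$) and the Cauchy--Schwarz estimate $\norm{R_-}_{L^1}\le\norm{(ik+m_-(k^2))^{-1}}_{L^2}\norm{ik-m_-(k^2)}_{L^2}$, combined with $ik-m_-(k^2)\in L^2(\Reals+ih)$ from Corollary~\ref{cor3.4}, are exactly what yielded $R_-\in L^1(\Reals+ih)$ in the proof of Proposition~\ref{pr4.1}. The corollary adds only the trivial observation that the factor $e^{2ikz}$ has constant modulus along $\Reals+ih$ whenever $z$ is real. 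For emphasis I would point out that this is the correct reading of ``finite $z$'': if $\Im z\neq0$ the relevant weight is $e^{-2h\Re z}e^{-2t\,\Im z}$, which blows up exponentially at one end of the contour, and since $R_-(k)$ only tends to $0$ like $o(1)$ as $t\to\pm\infty$ the product would no longer lie in $L^1$. The range $z\in\Reals$ is precisely the one needed later, when the kernel of the trace class Hankel operator $\mathbb{G}_x$ is represented as an inverse Fourier transform of $G$ taken along $\Reals+ih$.
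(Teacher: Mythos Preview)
Your argument is correct and essentially identical to the paper's own proof, which simply notes that $\norm{e^{2ikz}R_-(k)}_{L^1(\Reals+ih)}= e^{-2hz}\norm{R_-(k)}_{L^1(\Reals+ih)}$ and invokes Proposition~\ref{pr4.1}. The additional commentary you supply (on why $z$ must be real and on the later use for $\mathbb{G}_x$) is accurate but goes beyond what the paper records.
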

\begin{proof} Immediately follows from
$ \norm{e^{2ikz}R_-(k)}_{L^1(\Reals+ih)}= e^{-2hz}\norm{R_-(k)}_{L^1(\Reals+ih)}$.
\end{proof}

While trivial, the above corollary plays an important part in our arguments. The reflection coefficient for the shifted 
potential $
q(x+z)$ 
is $R(k)e^{2ikz}$ where $R(k)$ is the reflection coefficient corresponding to $
q(x)$. 

\begin{lemma}\label{lem4.3} Let $q_+ \in L^1_1(\Reals_+)$ and let $h>\beta$ where
$$\beta =2 \max\left\{\norm{q_+}_{L^1_1(\Reals_+)},1\right\}. $$
Then 
$$\norm{T_+(k)}_{L^\infty(\Reals+ih)},\norm{\widetilde{T}_+(k)}_{L^\infty(\Reals+ih)} \le 2^\beta$$
and $\delta T_+(k) \to 0$ in $L^\infty(\Reals+ih)$ as $a\to\infty$.
\end{lemma}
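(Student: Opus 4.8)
The plan is to express $T_+$ through the Jost solution $\Psi_+$ at $+\infty$ and run the standard Volterra iteration; the whole point is that on the line $\Reals+ih$ one has $\abs{k}\ge h>\beta$, so every Volterra bound is uniformly small there. Set $\mu(x,k):=e^{-ikx}\Psi_+(x,k)$, which solves the Volterra equation
$$\mu(x,k)=1+\int_x^\infty\frac{e^{2ik(t-x)}-1}{2ik}\,q_+(t)\,\mu(t,k)\,dt .$$
Combining \eqref{Tp} with the identity $m_+(k^2)=ik+\mu'(0,k)/\mu(0,k)$ (differentiate $\Psi_+=e^{ikx}\mu$) and the equation above, a short computation gives the classical representation (cf.\ \cite{Deift79})
$$\frac{1}{T_+(k)}=1-\frac{1}{2ik}\int_0^\infty q_+(t)\,\mu(t,k)\,dt ;$$
since $q_+$ is supported in $\Reals_+$, only the values $\mu(t,k)$ for $t\ge0$ enter.

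\textbf{Step 1: the uniform bound.} For $\Im k=h\ge0$ and $t\ge x$ we have $\abs{e^{2ik(t-x)}-1}\le2$, hence $\abs{e^{2ik(t-x)}-1}/(2\abs{k})\le1/\abs{k}$, and iterating the Volterra equation yields, for $x\ge0$,
$$\abs{\mu(x,k)}\le\exp\!\left(\frac{1}{\abs{k}}\int_x^\infty\abs{q_+(s)}\,ds\right)\le e^{\norm{q_+}_{L^1(\Reals_+)}/\abs{k}} .$$
On $\Reals+ih$ with $h>\beta$ we have $\abs{k}\ge h>\beta\ge2\norm{q_+}_{L^1_1(\Reals_+)}\ge2\norm{q_+}_{L^1(\Reals_+)}$, so $\norm{q_+}_{L^1(\Reals_+)}/\abs{k}<1/2$ and $e^{\norm{q_+}_{L^1(\Reals_+)}/\abs{k}}<2$, whence
$$\left\lvert\frac{1}{T_+(k)}-1\right\rvert\le\frac{\norm{q_+}_{L^1(\Reals_+)}}{2\abs{k}}\,e^{\norm{q_+}_{L^1(\Reals_+)}/\abs{k}}<\frac14\cdot2=\frac12 ,$$
so $\abs{T_+(k)}<2\le2^\beta$. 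The truncation $\tilde q_+=q_+|_{[0,a]}$ has $\norm{\tilde q_+}_{L^1_1}\le\norm{q_+}_{L^1_1}$, hence $\widetilde\beta\le\beta<h$, and the same computation gives $\norm{\widetilde T_+}_{L^\infty(\Reals+ih)}\le2^\beta$.

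\textbf{Step 2: convergence of $\delta T_+$.} Since $\tilde q_+=q_+|_{[0,a]}$, subtracting the two copies of the representation above gives, with $\delta q_+=q_+\chi_{(a,\infty)}$ and $\delta\mu=\mu-\widetilde\mu$,
$$\frac{1}{T_+(k)}-\frac{1}{\widetilde T_+(k)}=-\frac{1}{2ik}\int_0^\infty\delta q_+(t)\,\mu(t,k)\,dt-\frac{1}{2ik}\int_0^\infty\tilde q_+(t)\,\delta\mu(t,k)\,dt .$$
The first term is $\le(2h)^{-1}\norm{\delta q_+}_{L^1(\Reals_+)}\,e^{\norm{q_+}_{L^1(\Reals_+)}/h}$, and $\norm{\delta q_+}_{L^1(\Reals_+)}=\int_a^\infty\abs{q_+}\to0$ as $a\to\infty$, so it tends to $0$ uniformly in $k\in\Reals+ih$. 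For the second term, $\delta\mu(\cdot,k)$ solves $\delta\mu=K[q_+]\delta\mu+K[\delta q_+]\widetilde\mu$, where $K[p]g(x):=\int_x^\infty\frac{e^{2ik(t-x)}-1}{2ik}p(t)g(t)\,dt$ has norm $\le\norm{p}_{L^1(\Reals_+)}/\abs{k}$ on $C_b([0,\infty))$; since $\norm{q_+}_{L^1(\Reals_+)}/\abs{k}\le\norm{q_+}_{L^1(\Reals_+)}/h<1/2$, the Neumann series for $(1-K[q_+])^{-1}$ converges with norm $\le2$, so $\norm{\delta\mu(\cdot,k)}_\infty\le2\norm{K[\delta q_+]\widetilde\mu}_\infty\le2h^{-1}\norm{\delta q_+}_{L^1(\Reals_+)}\,e^{\norm{q_+}_{L^1(\Reals_+)}/h}\to0$, uniformly in $k$, and hence so does the second term. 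Therefore $\delta(1/T_+)\to0$ in $L^\infty(\Reals+ih)$, and from $\delta T_+=-T_+\widetilde T_+\,\delta(1/T_+)$ together with Step 1, $\norm{\delta T_+}_{L^\infty(\Reals+ih)}\le2^{2\beta}\norm{\delta(1/T_+)}_{L^\infty(\Reals+ih)}\to0$.

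\textbf{Main obstacle.} The only delicate point is keeping all estimates uniform in $k$ along the whole horizontal line $\Reals+ih$; this succeeds because every bound enters through $\norm{q_+}_{L^1(\Reals_+)}/\abs{k}$ with $\abs{k}\ge h$, so the smallness $\norm{q_+}_{L^1(\Reals_+)}/\abs{k}\le\norm{q_+}_{L^1(\Reals_+)}/h<1/2$ holds simultaneously for all such $k$, and the remainder is routine bookkeeping of Volterra iterates. One could instead argue directly from \eqref{Tp}, but one would still need Volterra bounds on $\Psi_+(0,k)$ and on $ik+m_+(k^2)$, and $h>\beta$ need not imply $h>h_+$ in the sense of Proposition \ref{pr4.1}, so little is gained.
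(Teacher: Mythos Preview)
Your proof is correct and, for the uniform bound, genuinely simpler than the paper's. The paper establishes $\abs{T_+(k)}\le 2^\beta$ via the Blaschke-type multiplicative representation
\[
T_+(k)=\prod_{n=1}^{N}\frac{k+i\varkappa_n^+}{k-i\varkappa_n^+}\exp\!\left(\frac{i}{\pi}\int_\Reals\frac{\log\abs{T_+(\omega)}^{-1}}{\omega-k}\,d\omega\right),
\]
bounding the exponential by $1$ (since $\abs{T_+}\le1$ on $\Reals$), controlling each pole $\varkappa_n^+$ with the Lieb-Thirring inequality, and the number $N$ of poles by the Bargmann bound $N\le\beta$. You bypass all of this spectral machinery: the same Volterra integral that the paper invokes only later for $\delta T_+$ already gives $\abs{1/T_+(k)-1}<1/2$ directly on $\Reals+ih$, hence the sharper $\abs{T_+(k)}<2$. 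The price is that your estimate is tied to $\abs{k}\ge h>\beta$, whereas the paper's Blaschke route in principle works anywhere above the poles; but since the statement concerns only $\Reals+ih$ with $h>\beta$, nothing is lost.

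For $\delta T_+\to0$ the two proofs are close in spirit: both start from the identity $\delta T_+=-T_+\widetilde T_+\,\delta(1/T_+)$ and estimate $\delta(1/T_+)$ through the Volterra equation. The paper uses the $k$-uniform bound $\abs{y_+(x,k)}\le K(\beta)(1+\abs{x})$ (valid for all $\Im k\ge0$, at the cost of the $L^1_1$ weight), then iterates the Volterra equation for $\delta y_+$; you instead invert $1-K[q_+]$ by a Neumann series, using $\norm{K[q_+]}\le\norm{q_+}_{L^1}/h<1/2$. Your version is tidier and avoids the unspecified constant $K(\beta)$, while the paper's estimate makes the $L^1_1$ dependence explicit. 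Either way the conclusion is the same.
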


\begin{proof} The following are well-known facts (e.g. \cite{Deift79}) for $q_+$ short range and supported on $\Reals_+$:
\begin{enumerate}
\item $T_+(k)$ is analytic in $\C^+$ except at a finite number of simple poles $\{i\varkappa_n^+\}_{n=1}^N$ where
\begin{equation}\label{N}
N \le 1+ \int_{\Reals_+}\abs{x}\abs{q_+(x)}dx.
\end{equation}
\item $\abs{T_+(k)}^2+\abs{L_+(k)}^2=1$ for a.e. real $k$ and $T_+(-k)=\conj{T_+(k)}$. 
\item $T_+(k)$ admits the following representation for any $k \in \C^+$:
\begin{equation*}
T_+(k)= \prod_{n=1}^N \frac{k+i\varkappa_n^+}{k-i\varkappa_n^+}\exp\left({\frac{i}{\pi}\int_\Reals \frac{\log \abs{T_+(\omega)}^{-1}}{\omega-k}d\omega}\right).
\end{equation*}
\end{enumerate}
We also have the Lieb-Thirring inequality \cite{Weidl}
\begin{equation}\label{LT}
\sum_{n=1}^N \varkappa_n^+ \le L_{1/2,1} \int_{\Reals_+} \abs{q_+(x)}dx
\end{equation}
where $1/2 \le L_{1/2,1} \le 1.005$. Thus, for any $k \in \Reals+ih$, $h>\beta$
\begin{align*}
\abs{T_+(k)} &= \abs{\prod_{n=1}^N\frac{k+i\varkappa_n^+}{k-i\varkappa_n^+}} \exp\left({\frac{1}{\pi} \int_\Reals \Re \frac{i}{\omega-k} \log{\abs{T_+(\omega)}^{-1}}d\omega}\right)\\
&= \prod_{n=1}^N \sqrt{1+\frac{4h\varkappa_n^+}{(h-\varkappa_n^+)^2}}\exp\left({\frac{-h}{\pi}\int_\Reals \frac{\log\abs{T_+(\omega)}^{-1}}{h^2+(\omega-\alpha)^2}d\omega}\right).
\end{align*}
Since $\abs{T_+(\omega)} \le 1$ for a.e. real $\omega$, the above becomes
$$\abs{T_+(k)} \le \prod_{n=1}^N \sqrt{1+\frac{4h\varkappa_n^+}{(h-\varkappa_n^+)^2}}. $$
But by \eqref{LT}, we also have for each $n$:
\begin{equation}\label{kappaLT}
\varkappa_n^+ \le L_{1/2,1} \norm{q_+}_{L^1(\Reals_+)} \le \frac{2}{3}\beta < \frac{2}{3}h
\end{equation}
and hence for all $k \in \Reals+ih$ where $h>\beta$
\begin{equation}\label{Tpinf}
\abs{T_+(k)} \le \left( \frac{11}{3} \right)^{\beta/2} \le 2^\beta
\end{equation}
where we have used $N \le \beta$ from \eqref{N}. Inequalities \eqref{N}-\eqref{Tpinf} are also valid for $\widetilde{T}_+(k)$ and thus $T_+,\widetilde{T}_+$ are uniformly bounded on $\Reals+ih$.

From \cite{Deift79}, we now use the following results 
\begin{align}
\dfrac{1}{T_+(k)} &= 1-\dfrac{1}{2ik} \int_{\Reals} q_+(x)y_+(x,k)dx \notag \\
\text{where } y_+(x,k)&:=e^{-ikx} \varphi_{\ell,+}(x,k) \text{ satisfies for }  \Im k \ge 0, \notag \\
y_+(x,k) &= 1+ \int_x^\infty D_k(t-x)q_+(t)y_+(t,k)dt \;,\; D_k(y)= \frac{e^{2iky}-1}{2ik}, \label{yp}\\
\abs{y_+(x,k)} &\le K(\beta) (1+\abs{x}) \notag 
\end{align}
where $K$ is a constant depending only on $\beta$. Note that 
\begin{equation}\label{dT}
\delta{T_+(k)}= \frac{\tilde{T_+(k)T_+(k)}}{2ik}\left[ \int_{\Reals}q_+(x)\delta y_+(x,k)dx+\int_\Reals\tilde{y}_+(x,k)\delta q_+(x)dx\right].
\end{equation}
From \eqref{yp},
\begin{equation}\label{eq5.12'}
\delta y_+(x,k)= \int_x^\infty D_k(t-x)\tilde{y}_+(t,k)\delta q_+(t)dt+\int_x^\infty D_k(t-x)q_+(t)\delta y_+(t,k)dt
\end{equation}
and since for $k\ne0$, $\abs{D_k(y)} \le \frac{1}{\abs{k}}$ for all $y \ge 0$, $\Im k \ge 0$,
\begin{align*}
\abs{\delta y_+(x,k)} &\le \frac{K(\beta)}{\abs{k}}\norm{\delta q_+}_{L^1_1(\Reals_+)}+ \int_x^\infty \frac{\abs{q_+(t)}}{\abs{k}}\abs{\delta y_+(t,k)}dt \\
&\le \frac{K(\beta)}{\abs{k}}e^{\frac{\beta}{2\abs{k}}}\norm{\delta q_+}_{L^1_1(\Reals_+)} \quad,\quad k \ne 0 \quad,\quad \Im k \ge0
\end{align*}
by iteration on the Volterra integral equation for $\frac{\delta y_+(x,k)k}{K(\beta)\norm{\delta q_+}_{L_1^1(\Reals)}}$ derived from \eqref{eq5.12'}. Hence for \eqref{dT}, we have
\begin{equation*}
\norm{\delta T_+(k)}_{L^{\infty}(\Reals+ih)} \le 2^{2\beta}\frac{K(\beta)}{\beta} \norm{\delta q_+}_{L^1_1(\Reals_+)}
\end{equation*}
where the right hand side goes to zero for $a\to\infty$ by the dominated convergence theorem.
\end{proof}

\begin{proposition}\label{pr5.5} Under Hypothesis \ref{hyp2.1},
$$\Delta R(k) := R(k)-R_+(k) $$
is analytic in $\C^+$ except on a set
$$\mathcal{S}= \left\{ i\varkappa_n^+\right\} \cup \sigma \subset i\Reals_+$$
where 
$$\left\{ -\left(\varkappa_n^+\right)^2\right\}=\Spec_d(-\partial_x^2+q_+) \quad , \quad \sigma=\left\{ \lambda: \lambda^2 \in \Spec(-\partial_x^2+q)\cap\Reals_-\right\} . $$
\end{proposition}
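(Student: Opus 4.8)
The plan is to use the Wronskian/$m$-function representation $\Delta R(k)=G(k)=T_+^2(k)R_-(k)/(1-L_+(k)R_-(k))$ established in \eqref{G} and locate the singularities of each factor. First I would recall that, by the short-range hypothesis on $q_+$, the transmission coefficient $T_+$ is meromorphic in $\C^+$ with exactly the finite set of simple poles $\{i\varkappa_n^+\}$, whose squares give $\Spec_d(-\partial_x^2+q_+)$ (Lemma \ref{lem4.3} and \cite{Deift79}); hence $T_+^2$ contributes only poles in $\{i\varkappa_n^+\}$. Next, from \eqref{Rm} and \eqref{Lp}, $R_-=(ik-m_-)/(ik+m_-)$ and $L_+=(ik-m_+)/(ik+m_+)$ inherit their singularity structure from the Herglotz functions $m_\pm(k^2)$: the poles of $m_-$ (equivalently the zeros of $ik+m_-$, and also the poles of $R_-$) occur exactly where $k^2$ is an eigenvalue of the half-line Dirichlet operator, while the zeros of $ik+m_+$ sit among the poles of $T_+$ (they are the $i\varkappa_n^+$). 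So each individual factor is meromorphic in $\C^+$ with poles confined to $i\Reals_+$, and $\Delta R$ is meromorphic there.

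The substantive point is to show that the only singularities of the quotient $G$ are at the points of $\mathcal S=\{i\varkappa_n^+\}\cup\sigma$, and that $\sigma$ is as described. I would argue as follows. Away from $i\Reals_+$ all four quantities $T_+,R_-,L_+$ are analytic, and for $k$ in the strip region $\Im k$ large the denominator $1-L_+R_-$ is bounded away from $0$ by Proposition \ref{pr4.1}(1) (both factors have sup-norm $\le 1/3$), so no spurious poles arise from $1-L_+R_- = 0$ at infinity; a standard connectedness/analytic-continuation argument then shows $1-L_+R_-$ can vanish on $\C^+$ only on $i\Reals_+$. On $i\Reals_+$, write $k=i\kappa$, $\kappa>0$, so $k^2=-\kappa^2<0$; the candidate singularities of $G$ are (i) poles of $T_+^2$, i.e. $\kappa\in\{\varkappa_n^+\}$; (ii) poles of $R_-$, i.e. $ik+m_-=0$, meaning $-\kappa^2$ is a Dirichlet eigenvalue on $\Reals_-$; (iii) zeros of $1-L_+R_-$. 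For (ii), I would check that a genuine pole of $R_-$ at $k=i\kappa$ forces $-\kappa^2\in\Spec(-\partial_x^2+q)$: such a $\kappa$ produces a solution of \eqref{star} that is $L^2$ near $-\infty$ (the Weyl solution $\Psi_-$, whose logarithmic derivative $m_-$ blows up) and whose continuation matches the decaying Jost-type solution on $\Reals_+$ precisely because the pole condition is $m_-(-\kappa^2)=-\kappa=ik|_{k=i\kappa}$, i.e. $m_-=-m_+^{\mathrm{free}}$ compatibility; conversely eigenvalues of the full operator $-\partial_x^2+q$ in $\Reals_-$ correspond to $m_-(-\kappa^2)+m_+(-\kappa^2)=0$, and one reconciles these via the identities \eqref{R}--\eqref{Tp}. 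Unwinding: the set of actual poles of $G$ on the negative-energy axis beyond the $i\varkappa_n^+$ is exactly $\sigma=\{\lambda:\lambda^2\in\Spec(-\partial_x^2+q)\cap\Reals_-\}$, with case (iii) either cancelling against a numerator zero (from $T_+^2$ or from $R_-$) or else reproducing points already in $\sigma$.

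The main obstacle I anticipate is case (iii) together with the possible cancellations: one must verify that at a point of $i\Reals_+$ that is a pole of $R_-$ but not of $T_+$, the factor $T_+^2/(1-L_+R_-)$ behaves correctly, and conversely that zeros of $1-L_+R_-$ not matched by zeros of the numerator land exactly on $\sigma$ and not elsewhere — i.e. that $1-L_+R_- = 0$ at $k=i\kappa$ on $\C^+$ is equivalent to $-\kappa^2$ being a true bound state of $-\partial_x^2+q$. This is where one leans hardest on the Herglotz structure of $m_\pm$ (real and strictly monotone on the negative real axis below the spectrum), on $\abs{T_+}^2+\abs{L_+}^2=1$, and on the fact that $m_-$ determines $q_-$ (Borg--Marchenko). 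I would handle it by rewriting $1-L_+(k)R_-(k)$ in terms of $m_\pm$: using \eqref{Rm}--\eqref{Lp} one gets $1-L_+R_-=\dfrac{2ik(m_-+m_+)}{(ik+m_+)(ik+m_-)}$, so the zeros of $1-L_+R_-$ in $\C^+$ are exactly the zeros of $k(m_-+m_+)$, i.e. (for $k\ne 0$) the points where $m_-(k^2)+m_+(k^2)=0$ — which are precisely the (negative) eigenvalues of $-\partial_x^2+q$, giving $\sigma$; and the poles of that same expression at $ik+m_+=0$ are cancelled against the $T_+^2$ factor, while poles at $ik+m_-=0$ are the $R_-$-poles already absorbed into $\sigma$ (or cancelled). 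Putting these together yields that $\Delta R$ is analytic on $\C^+\setminus\mathcal S$, which is the claim.
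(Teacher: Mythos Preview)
Your approach is the same as the paper's --- rewrite $G$ in terms of $m_\pm$ and locate singularities --- and your key identity
\[
1-L_+R_-=\dfrac{2ik\,(m_-+m_+)}{(ik+m_+)(ik+m_-)}
\]
is exactly what drives the paper's formula \eqref{G5.5} (substitute it back and use \eqref{Tp} to get $G=T_+\cdot\frac{ik-m_-}{m_++m_-}\cdot\frac{1}{\Psi_+(0,k)}$). So the core is right: zeros of $m_-+m_+$ are precisely the $\sigma$-points, and the only other poles come from $T_+$, giving $\{i\varkappa_n^+\}$.

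That said, several intermediate statements are wrong and should be cleaned up. First, poles of $m_-$ are \emph{not} the same as zeros of $ik+m_-$: a pole of $m_-$ means $\Psi_-(0,k)=0$ (a half-line Dirichlet eigenvalue), and at such a point $R_-\to -1$ is finite, not singular; the poles of $R_-$ occur at zeros of $ik+m_-$, which correspond instead to bound states of $-\partial_x^2+q_-$ on the \emph{full} line (match $\Psi_-$ to $e^{-\kappa x}$), not to Dirichlet eigenvalues and not automatically to $\Spec(-\partial_x^2+q)$. Second, your case (ii) claim that a pole of $R_-$ forces $-\kappa^2\in\Spec(-\partial_x^2+q)$ is false in general; what actually happens is that such a pole is \emph{cancelled} in $G$ because $1-L_+R_-$ has a pole of the same order there (visible from your identity: $(ik+m_-)$ sits in its denominator). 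So these points are removable for $G$, not ``absorbed into $\sigma$.''

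Finally, one case is missing from your list: the singularities of $m_\pm$ themselves (i.e.\ $\Psi_\pm(0,k)=0$). Your $m$-function identity is not literally valid there, so you must check separately that $G$ stays finite. The paper does this explicitly (its item (3)): at a pole of $m_-$ one has $(ik-m_-)/(m_++m_-)\to -1$, and at a pole of $m_+$ one uses \eqref{Tp} to see $T_+$ remains finite and $(ik+m_+)/(m_++m_-)\to 1$. Add this check and drop the incorrect identifications, and your argument is complete.
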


\begin{proof} 
By direct computation, we have 
\begin{equation} \label{G5.5}
G(k) 
=T_+(k)
\frac{ik-m_-(k^2)}{m_+(k^2)+m_-(k^2)}g(k)
\end{equation}
where
$$g(k)= \frac{T_+(k)}{1+L_+(k)}=\frac{1}{\Psi_+(0,k)}.$$

We gather the following facts:
\begin{enumerate}
\item it is well-known that $T_+(k),L_+(k)$ are analytic in $\C^+\setminus\left\{i\varkappa_n^+\right\}_{n=1}^{N}$ where $\left\{-\left(\varkappa_n^+\right)^2\right\}$ is the negative simple discrete spectrum of $-\partial_x^2+q_+$. So we also have that $g(k)$ is meromorphic in $\C^+$ but with poles different from those of $T_+,L_+$. Poles of $g(k)$ correspond to the poles of $m_+$, i.e. $\kappa$'s such that $\Psi_+(0,\kappa)=0$. 
\item recall that $m_\pm$ is analytic in $\C^+$ except for some singularities\footnote{Singularities of $m_+$ are a finite number poles since $q_+$ is short range \cite{Ramm} whereas the set of singularities of $m_-$, while bounded by $\gamma_-$, need not be made of poles -- it could be continuous.} $\kappa\in i\Reals_+$, hence so is $\frac{ik-m_-(k^2)}{(m_-+m_+)(k^2)}$. Note that $m_+(k^2)+m_-(k^2)=0$ corresponds to $W(\psi_-,\psi_+)=0$. But then if these two solutions to $\partial_x^2+q$ are linearly dependent, then $\psi_\pm(x,k)\in L^2(\Reals)$ and so 
$\lambda\in\sigma$. 
\item we now consider singularities of $m_\pm$. Those cases correspond exactly to $\psi_{\pm}(0,\kappa)=0$ with $\psi_{\pm}'(0,\kappa)\ne0$. But if $\psi_-(0,\kappa)=0$, we can assume $\psi_+(0,\kappa)\ne0$ (otherwise $\kappa\in\sigma$) and so $m_+(\kappa^2)$ finite. Therefore, by \eqref{G5.5},
$$G(\kappa)=-\frac{T(\kappa)}{\psi_+'(0,\kappa)} $$
is finite unless $\kappa\in\{i\varkappa_n^+\}$. Now if $\psi_+(0,\kappa)=0$, we have by \eqref{Tp} that ($\kappa\ne0$)
$$T_+(\kappa)=\frac{2i\kappa}{\psi_+'(0,\kappa)} $$
is finite and by \eqref{G5.5} since we can assume $m_-(\kappa^2)$ is finite, then
$$G(\kappa)= T_+(\kappa)\frac{i\kappa+m_-(\kappa^2)}{\psi_+'(0,\kappa)} $$
is finite too.
\end{enumerate}
Thus, we find that $G(k)$ is analytic in $\C^+\setminus \left(\left\{ i\varkappa_n^+\right\} \cup \sigma\right)$.
\end{proof}

\begin{remark} By Proposition \ref{pr3.1}, we have $\gamma > \sup\abs{\mathcal{S}}$ and thus $G(k)$ is smooth  
on $\Reals+ih$ for any $h>\gamma$. 
Note that $\sigma$ need not be finite, but is bounded.
\end{remark}

\begin{proposition}\label{pr4.3} Let $q$ be a real function on $\Reals$ such that $q\in \ell^\infty(L^2(\Reals_-))\cap L^1_1(\Reals_+)$ and let $h>h_0$ where $h_0=\max(h_+,h_-,\beta)$ where $h_\pm$ are as in Proposition \ref{pr4.1} and $\beta$ as in Lemma \ref{lem4.3}. Then $$G(k)=\Delta R(k) \in L^1(\Reals+ih)\quad\text{ and }\quad\delta G(k) \to 0\text{ in }L^1(\Reals+ih).$$ 
\end{proposition}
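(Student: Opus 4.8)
The plan is to deduce everything from the factorization \eqref{G}, namely
$$G(k)=\frac{T_+^2(k)R_-(k)}{1-L_+(k)R_-(k)},$$
together with the estimates collected in Propositions \ref{pr4.1} and Lemma \ref{lem4.3}. First I would fix $h>h_0=\max(h_+,h_-,\beta)$ and work on the line $\Reals+ih$. On this line Proposition \ref{pr4.1}(1) gives $\norm{L_+}_{L^\infty(\Reals+ih)},\norm{R_-}_{L^\infty(\Reals+ih)}\le 1/3$, so $\abs{L_+(k)R_-(k)}\le 1/9$ and the denominator $1-L_+(k)R_-(k)$ is bounded away from zero: $\abs{1-L_+(k)R_-(k)}\ge 8/9$. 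Hence $(1-L_+R_-)^{-1}\in L^\infty(\Reals+ih)$ with norm $\le 9/8$. By Lemma \ref{lem4.3}, $T_+^2\in L^\infty(\Reals+ih)$ with $\norm{T_+^2}_{L^\infty(\Reals+ih)}\le 2^{2\beta}$. Finally, Proposition \ref{pr4.1}(3) gives $R_-\in L^1(\Reals+ih)$. Writing $G=T_+^2\cdot R_-\cdot(1-L_+R_-)^{-1}$ as a product of two $L^\infty$ factors and one $L^1$ factor, Hölder yields $G\in L^1(\Reals+ih)$ with the explicit bound $\norm{G}_{L^1(\Reals+ih)}\le 2^{2\beta}\cdot\tfrac{9}{8}\cdot\norm{R_-}_{L^1(\Reals+ih)}$.

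For the convergence statement, I would estimate $\delta G=G-\widetilde G$ by a standard telescoping of the product. Write
$$\delta G = \delta(T_+^2)\cdot R_-\cdot(1-L_+R_-)^{-1}+\widetilde{T}_+^2\cdot\delta R_-\cdot(1-L_+R_-)^{-1}+\widetilde{T}_+^2\cdot\widetilde R_-\cdot\left[(1-L_+R_-)^{-1}-(1-\widetilde L_+\widetilde R_-)^{-1}\right].$$
The last bracket equals $(1-L_+R_-)^{-1}(1-\widetilde L_+\widetilde R_-)^{-1}\left(L_+R_--\widetilde L_+\widetilde R_-\right)$, and $L_+R_--\widetilde L_+\widetilde R_-=\delta L_+\cdot R_-+\widetilde L_+\cdot\delta R_-$. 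By Remark \ref{rem3.3} all the uniform bounds ($L^\infty$ norms $\le 1/3$, $\abs{1-\widetilde L_+\widetilde R_-}\ge 8/9$, $\norm{\widetilde T_+^2}_{L^\infty}\le 2^{2\beta}$, $\norm{\widetilde R_-}_{L^1}\le\norm{R_-}_{L^1}$ uniformly in $a$) hold for the truncated quantities as well. Now estimate each term in $L^1(\Reals+ih)$ by Hölder: the first term is bounded by $\norm{\delta(T_+^2)}_{L^\infty}\cdot\norm{R_-}_{L^1}\cdot\tfrac{9}{8}$; the second by $\norm{\widetilde T_+^2}_{L^\infty}\cdot\norm{\delta R_-}_{L^1}\cdot\tfrac{9}{8}$; and the third, after expanding, by a constant times $\norm{\delta L_+}_{L^\infty}\norm{R_-}_{L^1}+\norm{\delta R_-}_{L^1}$. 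Since $\delta(T_+^2)=\delta T_+\cdot(T_++\widetilde T_+)$ and $\norm{\delta T_+}_{L^\infty(\Reals+ih)}\to 0$ by Lemma \ref{lem4.3} (with $T_++\widetilde T_+$ uniformly bounded), $\norm{\delta L_+}_{L^\infty(\Reals+ih)}\to 0$ by Proposition \ref{pr4.1}(2), and $\norm{\delta R_-}_{L^1(\Reals+ih)}\to 0$ by Proposition \ref{pr4.1}(4), all three terms vanish as $a\to\infty$, whence $\delta G\to 0$ in $L^1(\Reals+ih)$.

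The only slightly delicate point is purely algebraic bookkeeping rather than analysis: making sure the denominators $1-L_+R_-$ and $1-\widetilde L_+\widetilde R_-$ stay uniformly bounded below so that their inverses, and the difference of their inverses, are controlled — but this is immediate from the $1/3$ bounds in Proposition \ref{pr4.1}(1) and Remark \ref{rem3.3}. No genuine obstacle arises here; the proposition is essentially a corollary of the work already done in Propositions \ref{pr3.1}, \ref{pr4.1} and Lemma \ref{lem4.3}, assembled through the factorization \eqref{G}. I would therefore present the proof as: (i) state the factorization and the uniform bounds on $T_+^2$, $(1-L_+R_-)^{-1}$; (ii) apply Hölder to get $G\in L^1$; (iii) telescope $\delta G$ and invoke the three convergence results to conclude.
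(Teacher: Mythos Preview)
Your proof is correct and follows essentially the same strategy as the paper: exploit the factorization \eqref{G}, control the denominator via $\norm{L_+R_-}_{L^\infty}\le 1/9$, use $T_+^2\in L^\infty$ and $R_-\in L^1$ to get $G\in L^1$ by H\"older, and then decompose $\delta G$ so that each summand contains exactly one of $\delta T_+$, $\delta L_+$, $\delta R_-$. The paper puts $\delta G$ directly over the common denominator $(1-L_+R_-)(1-\widetilde L_+\widetilde R_-)$ and groups the numerator as $G_1\,\delta L_++G_2\,\delta T_++G_3\,\delta R_-$, whereas you telescope the three-factor product first and then expand the difference of inverses; the two decompositions are algebraically different regroupings of the same identity and the subsequent estimates are identical in spirit.

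One minor remark: the parenthetical claim $\norm{\widetilde R_-}_{L^1}\le\norm{R_-}_{L^1}$ is not obviously true as stated (Remark \ref{rem3.3} gives a uniform bound, not monotonicity), but you never actually use it---in your third term you only need $\widetilde R_-\in L^\infty$ with norm $\le 1/3$, which you have. So the argument stands as written; just drop that unused inequality.
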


\begin{proof} Omitting the variable $k$ for brevity, from \eqref{G},  
Proposition \ref{pr4.1} and Lemma \ref{lem4.3}, we have:
$$ \norm{G}_{L^1(\Reals+ih)} \le \frac{9}{8}\cdot 2^{2h_0} \cdot 
\norm{R_-}_{L^1(\Reals+ih)} <\infty. $$

By direct computation,
\begin{equation}\label{dG}
\delta G = \frac{G_1 \delta L_++G_2\delta T_++G_3\delta R_-}{(1-L_+R_-)(1-\widetilde{L}_+\widetilde{R}_-)}
\end{equation}
where 
$$G_1= {{T}_+^2R_-\widetilde{R}_-} \quad,\quad G_2={R_-}(1-{L}_+\widetilde{R}_-)(T_++\widetilde{T}_+) \quad,\quad G_3={\widetilde{T}_+^2}. $$
By Proposition \ref{pr4.1},
$$\norm{\frac{1}{(1-L_+R_-)(1-\widetilde{L}_+\widetilde{R}_-)}}_{L^\infty(\Reals+ih)} \le \left(\frac{9}{8}\right)^2 $$
so it is enough to show that each term inside the brackets in \eqref{dG} goes to zero in $L^1(\Reals+ih)$ as $a \to\infty$. Indeed
\begin{align*}
\norm{G_1\delta L_+}_{L^1(\Reals+ih)} &\le \frac{2^{2h_0}}{3}
\cdot\norm{R_-}_{L^1(\Reals+ih)}\cdot\norm{\delta L_+}_{L^\infty(\Reals+ih)} \; \to \; 0, \\
\norm{G_2\delta T_+}_{L^1(\Reals+ih)} &\le \frac{10}{9}\cdot{2^{2h_0}}\norm{R_-}_{L^1(\Reals+ih)}
\cdot\norm{\delta T_+}_{L^\infty(\Reals+ih)} \; \to \; 0, \\
\norm{G_3\delta R_-}_{L^1(\Reals+ih)} &\le 2^{2h_0}
\cdot\norm{\delta R_-}_{L^1(\Reals+ih)} \; \to \; 0. \qedhere
\end{align*}
\end{proof}

\begin{corollary}\label{cor5.8}Let $z$ be a fixed real parameter and let $q,h$ be as in Proposition \ref{pr4.3}. Then $G_z(k):=e^{2ikz}\Delta R(k) \in L^1(\Reals+ih)$ and $\delta G_z(k) \to 0$ in $L^1(\Reals+ih)$.
\end{corollary}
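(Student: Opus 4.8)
The statement to be proved is Corollary \ref{cor5.8}: for a fixed real $z$ and $q,h$ as in Proposition \ref{pr4.3}, the function $G_z(k) = e^{2ikz}\Delta R(k)$ lies in $L^1(\Reals+ih)$ and $\delta G_z(k)\to 0$ in $L^1(\Reals+ih)$ as $a\to\infty$. The plan is to reduce everything to the corresponding assertions for $\Delta R$ itself, which are exactly the content of Proposition \ref{pr4.3}, by exploiting the fact that multiplication by $e^{2ikz}$ has constant modulus along the horizontal line $\Reals+ih$.

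First I would record the pointwise identity: for $k = \alpha + ih$ with $\alpha\in\Reals$, one has $\abs{e^{2ikz}} = \abs{e^{2i\alpha z}e^{-2hz}} = e^{-2hz}$, a constant independent of $\alpha$. Consequently, for any function $F\in L^1(\Reals+ih)$,
\begin{equation*}
\norm{e^{2ikz}F(k)}_{L^1(\Reals+ih)} = e^{-2hz}\,\norm{F(k)}_{L^1(\Reals+ih)}.
\end{equation*}
Applying this with $F = \Delta R$ and invoking the first half of Proposition \ref{pr4.3} (which gives $\Delta R\in L^1(\Reals+ih)$) yields $G_z = e^{2ikz}\Delta R\in L^1(\Reals+ih)$, with $\norm{G_z}_{L^1(\Reals+ih)} = e^{-2hz}\norm{\Delta R}_{L^1(\Reals+ih)} < \infty$. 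This is the same argument already used in Corollary \ref{cor5.3} for $R_-$.

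For the convergence statement, I would note that $\delta G_z = G_z - \widetilde{G}_z = e^{2ikz}(\Delta R - \widetilde{\Delta R}) = e^{2ikz}\,\delta G$, since the exponential factor $e^{2ikz}$ does not depend on the truncation parameter $a$ and so is unchanged under the tilde operation. Then, applying the modulus identity once more,
\begin{equation*}
\norm{\delta G_z(k)}_{L^1(\Reals+ih)} = e^{-2hz}\,\norm{\delta G(k)}_{L^1(\Reals+ih)},
\end{equation*}
and the right-hand side tends to $0$ as $a\to\infty$ by the second half of Proposition \ref{pr4.3}. This completes the proof.

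There is essentially no obstacle here: the corollary is a direct consequence of Proposition \ref{pr4.3} together with the elementary observation that $\abs{e^{2ikz}}$ is constant on the shifted line $\Reals+ih$. The only point worth being slightly careful about is the bookkeeping that $\widetilde{G_z} = e^{2ikz}\widetilde{G}$ — i.e. that the weight $e^{2ikz}$ commutes with the truncation — which follows immediately because the weight is a fixed analytic function of $k$ with no dependence on $q$.
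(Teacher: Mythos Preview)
Your proof is correct and essentially the same as the paper's: both reduce the claim to Proposition \ref{pr4.3} via the observation that $\abs{e^{2ikz}}=e^{-2hz}$ is constant along $\Reals+ih$. The paper phrases this through the shifted potential $q(\cdot+z)$ and invokes Corollary \ref{cor5.3}, whereas you carry out the modulus computation directly (as in the proof of Corollary \ref{cor5.3}); the underlying argument is identical.
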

\begin{proof} Note that $G_z(k)$ correspond to the shifted potential $q(x+z)$. For such potential, $R_-(k)$ becomes $R_-(k)e^{2ikz}$, $T_+(k)$ remains the same and $L_+(k)$ becomes $L(k)e^{-2ikz}$. So by Corollary \ref{cor5.3} and Proposition \ref{pr4.3}, $G_z(k) \in L^1(\Reals+ih)$ and $\delta G_z(k)\to 0$ in $L^1(\Reals+ih)$.
\end{proof}

\section{A trace class operator}

In this section we introduce a lemma which appeared in a more general form in \cite{Ryprep} and will be a central argument in the main result of this paper in the next section.

\begin{proposition}\label{pr6.2}
Let $A$ be smooth on $\Reals+ih$ for some $h>0$, and $A \in L^1(\Reals+ih)$. Then the integral operator $\mathbb{A}$ on $L^2(\Reals_+)$ with kernel
$$\mathbf{A}(x,y)=
\int_{\Reals+ih}e^{ik(x+y)}A(k)\frac{dk}{2\pi} 
\quad,\quad x,y \ge0$$
is trace class, and
$$\norm{\mathbb{A}}_{\mathfrak{S}_1} \le \frac{1}{4\pi h}\norm{A}_{L^1(\Reals+ih)}. $$
\end{proposition}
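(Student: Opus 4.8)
The plan is to realize $\mathbb{A}$ as a superposition (over the spectral parameter $k$ along the horizontal line $\Reals+ih$) of rank-one operators and to estimate the trace norm by integrating the trace norms of these pieces. Concretely, writing $k=\alpha+ih$, the kernel becomes
\[
\mathbf{A}(x,y)=\int_\Reals e^{i(\alpha+ih)(x+y)}A(\alpha+ih)\,\frac{d\alpha}{2\pi}
=\int_\Reals e^{i\alpha(x+y)}\,e^{-h(x+y)}A(\alpha+ih)\,\frac{d\alpha}{2\pi},
\]
so for each fixed $\alpha$ the inner expression $e^{-hx}e^{i\alpha x}\cdot e^{-hy}e^{i\alpha y}$ is the kernel of the rank-one operator $e_\alpha \otimes \conj{e_{-\alpha}}$ (up to conjugation bookkeeping), where $e_\alpha(x):=e^{-hx}e^{i\alpha x}\in L^2(\Reals_+)$. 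Since $q_+$ smoothness and $A\in L^1(\Reals+ih)$ guarantee that the $\alpha$-integral converges absolutely in an appropriate sense, one expects
\[
\norm{\mathbb{A}}_{\mathfrak{S}_1}\le \int_\Reals \abs{A(\alpha+ih)}\,\norm{e_\alpha}_{L^2(\Reals_+)}\norm{e_{-\alpha}}_{L^2(\Reals_+)}\,\frac{d\alpha}{2\pi}.
\]

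The key computation is then just $\norm{e_\alpha}_{L^2(\Reals_+)}^2=\int_0^\infty e^{-2hx}\,dx=\frac{1}{2h}$, which is independent of $\alpha$; hence $\norm{e_\alpha}_{L^2(\Reals_+)}\norm{e_{-\alpha}}_{L^2(\Reals_+)}=\frac{1}{2h}$, and substituting gives
\[
\norm{\mathbb{A}}_{\mathfrak{S}_1}\le \frac{1}{2\pi}\cdot\frac{1}{2h}\int_\Reals \abs{A(\alpha+ih)}\,d\alpha=\frac{1}{4\pi h}\norm{A}_{L^1(\Reals+ih)},
\]
which is exactly the claimed bound. The steps in order: (1) change variables to the line $\Reals+ih$ and split the exponential into an oscillatory factor times a decaying factor; (2) recognize the integrand as a (norm-integrable) family of rank-one operators $\alpha\mapsto \abs{A(\alpha+ih)}\,e_\alpha\otimes\conj{e_{-\alpha}}$; (3) invoke the fact that $\mathfrak{S}_1$ is a Banach space, so a Bochner integral of $\mathfrak{S}_1$-valued functions with integrable norm lies in $\mathfrak{S}_1$ and its trace norm is dominated by the integral of the trace norms; (4) compute the elementary $L^2(\Reals_+)$ norms and collect constants.

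\textbf{Main obstacle.} The one genuinely delicate point is the justification of step (3): that the operator defined by the kernel $\mathbf{A}(x,y)$ actually coincides with the Bochner integral $\int_\Reals \frac{d\alpha}{2\pi}\,A(\alpha+ih)\,(e_\alpha\otimes\conj{e_{-\alpha}})$ as an element of $\mathfrak{S}_1$, rather than merely as a bounded operator. One must check that $\alpha\mapsto A(\alpha+ih)\,e_\alpha\otimes\conj{e_{-\alpha}}$ is Bochner measurable (continuity in $\alpha$ in trace norm follows from strong continuity of $\alpha\mapsto e_\alpha$ in $L^2(\Reals_+)$ together with rank-one structure) and that its $\mathfrak{S}_1$-norm $\abs{A(\alpha+ih)}/(2h)$ is integrable, which is precisely the hypothesis $A\in L^1(\Reals+ih)$; then the Bochner integral exists in $\mathfrak{S}_1$, and a pairing against Hilbert-Schmidt operators (or simply against $f\otimes\conj{g}$ for $f,g\in L^2(\Reals_+)$) identifies it with $\mathbb{A}$ by Fubini. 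An alternative route, avoiding Bochner-integral language, is to factor $\mathbb{A}=\mathbb{B}^*\mathbb{C}$ with $\mathbb{B},\mathbb{C}:L^2(\Reals_+)\to L^2(\Reals+ih)$ suitable Hilbert-Schmidt (weighted Fourier-type) operators and then use $\norm{\mathbb{B}^*\mathbb{C}}_{\mathfrak{S}_1}\le\norm{\mathbb{B}}_{\mathfrak{S}_2}\norm{\mathbb{C}}_{\mathfrak{S}_2}$; balancing the split of $\abs{A}^{1/2}$ between $\mathbb{B}$ and $\mathbb{C}$ reproduces the same constant $\frac{1}{4\pi h}$. Either way the arithmetic is routine; only the measurability/identification bookkeeping requires care.
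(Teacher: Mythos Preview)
Your proposal is correct, and both routes you sketch are valid. Your primary argument---writing $\mathbb{A}$ as a Bochner integral over $\alpha\in\Reals$ of the rank-one operators with kernel $A(\alpha+ih)\,e_\alpha(x)e_\alpha(y)$ and bounding the trace norm by the integral of the rank-one trace norms---is a genuinely different route from the paper's. The paper instead factors $\mathbb{A}=\mathbb{A}_1\mathbb{A}_2$ through the intermediate space $L^2(\Reals)$ by writing $\widehat{A_h}(x+y)=\bigl(\widehat{\sqrt{A_h}}*\widehat{\sqrt{A_h}}\bigr)(x+y)$ (where $A_h(\alpha):=A(\alpha+ih)$), so that $\mathbf{A}_1(x,s)=\chi(x)e^{-hx}\widehat{\sqrt{A_h}}(x-s)$ and $\mathbf{A}_2(s,y)=\chi(y)e^{-hy}\widehat{\sqrt{A_h}}(y+s)$, and then computes each Hilbert--Schmidt norm directly via Plancherel to get $\norm{\mathbb{A}_j}_{\mathfrak{S}_2}^2=\frac{1}{4\pi h}\norm{A_h}_{L^1(\Reals)}$. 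This is essentially the ``alternative route'' you mention at the end, just with the intermediate space taken in position rather than frequency variables. Your Bochner-integral argument is arguably cleaner in that it avoids choosing a measurable square root of the complex-valued $A_h$ and needs no Plancherel identity, at the cost of the Bochner-measurability bookkeeping you flag; the paper's factorization is more hands-on and sidesteps any abstract integration in $\mathfrak{S}_1$. Both yield the identical constant $\frac{1}{4\pi h}$.
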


\begin{proof}
Denote $A_h(\alpha)=A(\alpha+ih)$ and $\widehat{f}(z)=\frac{1}{2\pi}\int_\Reals e^{ikz}f(k)dk$. Then rewrite $\mathbb{A}$ as an operator on $L^2(\Reals)$ by considering $x,y \in \Reals$ and:
$$\mathbf{A}(x,y)=\chi(x)e^{-h(x+y)}\widehat{A_h}(x+y)\chi(y)$$
where $\chi$ is the characteristic function on $\Reals_+$. By convolution and a change of variable, we have:
\begin{align*}
\widehat{A_h}(x+y) &= \left( \widehat{\sqrt{A_h}} * \widehat{\sqrt{A_h}}\right)(x+y)\\
&= \int_\Reals \widehat{\sqrt{A_h}}(x-s)\widehat{\sqrt{A_h}}(y+s)ds.
\end{align*}

So $\mathbb{A}=\mathbb{A}_1\mathbb{A}_2$ where $\mathbb{A}_1,\mathbb{A}_2$ are operators on $L^2(\Reals)$ with kernels
\begin{align*}
\mathbf{A}_1(x,s) &= \chi(x)e^{-hx}\widehat{\sqrt{A_h}}(x-s),\\
\mathbf{A}_2(s,y) &= \chi(y)e^{-hy}\widehat{\sqrt{A_h}}(y+s).
\end{align*}

One readily has 
\begin{align*}
\norm{\mathbb{A}_k}_{\mathfrak{S}_2}^2 &= \iint_{\Reals^2}\abs{\mathbf{A}_k(\xi,\eta)}^2d\xi d\eta \\
&= \int_\Reals \chi(z)e^{-2hz}dz \int_\Reals \abs{\widehat{\sqrt{A_h}}(S)}^2dS\\
&= \frac{1}{4\pi h}\norm{\sqrt{A_h}}_{L^2(\Reals)}^2=\frac{1}{4\pi h}\norm{A_h}_{L^1(\Reals)}
\end{align*}
where we have used the Plancherel equality $\norm{\widehat{f}}_2^2=\frac{1}{2\pi}\norm{f}_2^2$ and hence
\begin{equation*}\norm{\mathbb{A}}_{\mathfrak{S}_1} \leq \norm{\mathbb{A}_1}_{\mathfrak{S}_2}\norm{\mathbb{A}_2}_{\mathfrak{S}_2}=\frac{1}{4\pi h}\norm{A_h}_{L^1(\Reals)}=\frac{1}{4\pi h}\norm{A}_{L^1(\Reals+ih)}. \qedhere
\end{equation*}
\end{proof}

\section{Classical Marchenko inverse scattering}

In this section we review well-known facts about how to recover a potential $q$ in the Faddeev class from the scattering data associated with the Schr\"odinger operator $-\partial_x^2+q(x)$ via the Marchenko inverse scattering procedure (see e.g. \cite{AK01,Deift79,Fad}).

For $q \in L^1_1(\Reals)$, the scattering data consisting of
\begin{itemize}
\item the discrete spectrum $\{-\varkappa_n^2\}_{n=1}^N$ of the Schr\"{o}dinger operator $-\partial_x^2+q(x)$ on $L^2(\Reals)$,
\item norming constants $\{c_n\}_{n=1}^N$ associated to the bound states of the Schr\"{o}dinger operator,
\item and the reflection coefficient $R(k)$, $k\in\Reals$
\end{itemize}
determine together the potential uniquely. By the inverse scattering procedure 
\begin{equation}\label{ISq}
q(x)= -2 \partial_xk_x(0^+),
\end{equation}
where $k_x \in L^2(\Reals_+)$ solves the Marchenko equation
\begin{equation}\label{kx}
k_x(y)+M_x(y)+ \int_0^\infty M_x(y+z)k_x(z)dz=0 \quad,\quad y>0
\end{equation}
with
\begin{align}
M_x(\cdot) &= M(\cdot+2x) \label{eqMx2},\\
M(s) &=\sum_{n=1}^N c_n^2e^{-\varkappa_ns}+\frac{1}{2\pi}\int_\Reals e^{iks}R(k)dk. \label{eqMker}
\end{align}

If we define the Marchenko operator on $L^2(\Reals_+)$ as:
\begin{equation}
\left(\mathbb{M}_xf\right)(y) = \int_0^\infty M_x(y+s)f(s)ds \quad, \quad f\in L^2(\Reals_+) \label{eqMxop} 
\end{equation}
then \eqref{kx} becomes
\begin{equation*}
(1+\mathbb{M}_x)k_x(y)=-M_x(y)
\end{equation*}
and $1+\mathbb{M}_x$ is boundedly invertible 
\cite{Deift79}.

Assuming the Fredholm determinant in \eqref{eqMxop} is well-defined, one can also rewrite \eqref{ISq} as \cite{Fad}
\begin{equation}\label{Fadq}
q(x) = -2 \partial_x^2 \log \det (1+\mathbb{M}_x),
\end{equation}
known as the Bargmann, Dyson, or determinant formula (see e.g. \cite{RybIP09}). The determinant is well-defined if $\mathbb{M}_x$ is trace class. However, we don't know if it is the case for a generic short range potential.

We choose to detour this fact. To this end, we express the Marchenko kernel in a different form. Recall the well-known fact (see e.g. \cite{AK01}) that if a short range potential is supported on $\Reals_-$, then $R(k)$ can be analytically continued in $\C^+$, its poles are $\{i\varkappa_n\}_{n=1}^N$ and 
$$\Res (R(k),i\varkappa_n)=ic_n^2, $$
where $c_n$ is the norming constant associated to $\varkappa_n$ in the scattering data.

So for any $h> \max\{\varkappa_n\}$, one can deform the contour \cite{RybIP09} in \eqref{eqMker} and by the residue theorem rewrite the Marchenko kernel \eqref{eqMker}
\begin{equation}\label{eqMh}
M(s)= \frac{1}{2\pi} \int_{\Reals+ih}e^{iks}R(k)dk.
\end{equation}

Note that \eqref{eqMh} can then be used for any compactly supported $q$'s with compact support using a shifting argument. 

\begin{remark} We will consider a potential which is locally square integrable on the line, in $\ell^\infty(L^2(\Reals_-))$ and such that $q_+$ is Faddeev class. 
The classical inverse scattering results do not apply directly since $q$ is not short range and the negative part of the spectrum of $-\partial_x^2+q(x)$ need not be finite. However, since $q \in L^2_{loc}(\Reals)\subset L^1_{loc}(\Reals)$, we have 
\begin{itemize}
\item $\tilde{q} \in L^1_1(\Reals)$ so the classical inverse scattering procedure applies to the truncated potential,
\item $\tilde{q}$ is compactly supported so we can use \eqref{eqMh}, 
\item $\tilde{q}\in \ell^\infty(L^2(\Reals))$ which we will show implies that $\widetilde{\mathbb{M}}_x$ is trace class and so \eqref{Fadq} applies.
\end{itemize}
The above will be a basis for our limiting procedure.
\end{remark}

\section{Main result}

We now present our main result which gives a formula to recover a nondecaying unknown potential $q_-$ assuming that $q_+$ and the reflection coefficient $R$ are known.

\begin{theorem}\label{thm8.1} Let $q$ be a real, locally square integrable potential on $\Reals$ such that  ($q_\pm=q|_{\Reals_\pm}$)
\begin{itemize}
\item $\displaystyle\sup_{x\le0} \int_{x-1}^x \abs{q_-(s)}^2ds <\infty$ (uniformly in $L^2_{loc}$),
\item $\displaystyle \int_{\Reals_+}(1+x)\abs{q_+(x)}dx<\infty$ (short range) 
\end{itemize} 
and let $R(k)$, $R_+(k)$ be the right reflection coefficient corresponding to $q,q_+$ respectively. 

Let $\mathbb{M}_x^+$ be the Marchenko operator associated with the scattering data $$\left\{ R_+(k),-(\varkappa_n^+)^2,c_n^+ \right\}_{k\in\Reals, 1\le n\le N}$$ for $q_+$ (given by \eqref{eqMx2}-\eqref{eqMxop}) and let $\mathbb{G}_x$ be the Hankel integral operator associated with $R-R_+$. I.e.
\begin{align}
\left(\mathbb{G}_xf\right)(y) &= \int_0^\infty \mathbf{G}_x(y+s)f(s)ds \quad,\quad f\in L^2(\Reals_+), \label{eq8.-1}\\
\mathbf{G}_x(s) &= \frac{1}{2\pi} \int_{\Reals+ih}e^{ik(s+2x)}(R-R_+)(k)dk \label{eq8.0}
\end{align}
with some $h>0$ sufficiently large.

Then for any $x<0$ 
\begin{equation}\label{eqqm}
q_-(x)=-2 \partial_x^2 \log\det \left(1+\left(1+\mathbb{M}_x^+\right)^{-1}\mathbb{G}_x\right)
\end{equation}
with the determinant defined in the classical Fredholm sense.
\end{theorem}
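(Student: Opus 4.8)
The plan is to realize the right-hand side of \eqref{eqqm} as a limit of the classical determinant formula applied to truncated potentials, and then identify the limit. First I would fix $x<0$ and a truncation $\tilde q=q|_{[-a,a]}$ with $a>-x$ so that $\tilde q_-$ agrees with $q_-$ near $x$. Since $\tilde q\in L^1_1(\Reals)$ and is compactly supported, the classical Marchenko theory of Section 7 applies: $\widetilde{\mathbb{M}}_x$ is well defined and (by the argument in the last remark of Section 7, using $\tilde q\in\ell^\infty(L^2(\Reals))$ together with Proposition~\ref{pr6.2} applied to $\widetilde R$ on a horizontal line $\Reals+ih$ via \eqref{eqMh}) is trace class, so the determinant formula \eqref{Fadq} gives $\tilde q(x)=-2\partial_x^2\log\det(1+\widetilde{\mathbb{M}}_x)$. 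The same reasoning applied to $\tilde q_+=q_+|_{[0,a]}$ (short range, compactly supported) shows $\widetilde{\mathbb{M}}_x^+$ is trace class and $1+\widetilde{\mathbb{M}}_x^+$ is boundedly invertible.

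Next I would perform the algebraic splitting of the kernel. Write the full Marchenko kernel for $\tilde q$ as $\widetilde M_x=\widetilde M_x^++\widetilde{\mathbf G}_x$, where $\widetilde M_x^+$ is built from $\widetilde R_+$ and $\widetilde{\mathbf G}_x$ from $\Delta\widetilde R=\widetilde R-\widetilde R_+$, both represented by contour integrals over $\Reals+ih$ as in \eqref{eqMh} and \eqref{eq8.0} — here one must be slightly careful about the poles: $\widetilde R$ and $\widetilde R_+$ each have finitely many poles on $i\Reals_+$ whose residues encode the norming constants, while $\Delta\widetilde R$ by Proposition~\ref{pr5.5} has poles only at the $\{i\varkappa_n^+\}$ coming from $\widetilde R_+$, so choosing $h>\gamma$ large enough the shifted-contour representation is valid for all three and the residue bookkeeping matches the discrete part of the Marchenko kernel. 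Then $1+\widetilde{\mathbb M}_x=(1+\widetilde{\mathbb M}_x^+)(1+(1+\widetilde{\mathbb M}_x^+)^{-1}\widetilde{\mathbb G}_x)$, and the multiplicativity of the Fredholm determinant gives
\begin{equation*}
\det(1+\widetilde{\mathbb M}_x)=\det(1+\widetilde{\mathbb M}_x^+)\det\bigl(1+(1+\widetilde{\mathbb M}_x^+)^{-1}\widetilde{\mathbb G}_x\bigr).
\end{equation*}
Taking $-2\partial_x^2\log$ of both sides and using $\tilde q(x)=\tilde q_+(x)=0$ for $x<0$ (since $x$ is in the left half line where $\tilde q$ agrees with $q_-$... more precisely $-2\partial_x^2\log\det(1+\widetilde{\mathbb M}_x^+)=\tilde q_+(x)=0$ for $x<0$ by \eqref{Fadq} applied to $\tilde q_+$) yields the identity $\tilde q(x)=q_-(x)=-2\partial_x^2\log\det(1+(1+\widetilde{\mathbb M}_x^+)^{-1}\widetilde{\mathbb G}_x)$ for the truncated potential — note $\tilde q(x)=q(x)=q_-(x)$ there by our choice of $a$.

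Finally I would pass to the limit $a\to\infty$. The key inputs are: (i) $\mathbb G_x$ and $\mathbb M_x^+$ are trace class — the former by Proposition~\ref{pr6.2} together with Proposition~\ref{pr4.3} (which gives $\Delta R\in L^1(\Reals+ih)$, hence via Corollary~\ref{cor5.8} also the $x$-shifted version $G_x\in L^1(\Reals+ih)$), the latter by the classical short-range theory; (ii) $\delta(\Delta R)=\Delta R-\Delta\widetilde R\to0$ in $L^1(\Reals+ih)$ by Proposition~\ref{pr4.3}/Corollary~\ref{cor5.8}, so $\widetilde{\mathbb G}_x\to\mathbb G_x$ in $\mathfrak S_1$ by the norm estimate of Proposition~\ref{pr6.2}; and the analogous convergence $\widetilde{\mathbb M}_x^+\to\mathbb M_x^+$ in $\mathfrak S_1$ from the short-range theory and Lemma~\ref{lem4.3}. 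Since $1+\widetilde{\mathbb M}_x^+\to 1+\mathbb M_x^+$ in $\mathfrak S_1$ and the limit is boundedly invertible, $(1+\widetilde{\mathbb M}_x^+)^{-1}\to(1+\mathbb M_x^+)^{-1}$ in operator norm, hence $(1+\widetilde{\mathbb M}_x^+)^{-1}\widetilde{\mathbb G}_x\to(1+\mathbb M_x^+)^{-1}\mathbb G_x$ in $\mathfrak S_1$; continuity of $\det$ on $\mathfrak S_1$ then gives convergence of the determinants, locally uniformly in $x<0$ (using the $x$-dependence through the exponential factor $e^{2ikx}$, controlled by Corollary~\ref{cor5.8}). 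To conclude $-2\partial_x^2\log\det(1+\widetilde{\mathbb M}_x^+)^{-1}\widetilde{\mathbb G}_x\to -2\partial_x^2\log\det(1+\mathbb M_x^+)^{-1}\mathbb G_x$ one upgrades pointwise convergence to $C^2_{loc}$ convergence — this is where one needs that the determinants, as functions of $x$, are real-analytic (or at least $C^2$) with locally uniformly controlled derivatives, obtained by differentiating the kernel representations under the integral sign and reusing the trace-class bounds. I expect \textbf{this last step — justifying that the second $x$-derivative commutes with the $a\to\infty$ limit} — to be the main obstacle, since it requires uniform-in-$a$ control of $\partial_x^2$ of the Fredholm determinants, not merely of the determinants themselves; everything else is a bookkeeping of the already-established $L^1(\Reals+ih)$ estimates through Proposition~\ref{pr6.2}.
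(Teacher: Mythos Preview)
Your overall strategy matches the paper's: prove the determinant formula for the compactly supported truncation $\tilde q$, split $\widetilde{\mathbb M}_x=\widetilde{\mathbb M}_x^++\widetilde{\mathbb G}_x$, factor the determinant, use $\tilde q_+(x)=0$ for $x<0$ to isolate $\tilde q_-(x)$, and then pass to the limit $a\to\infty$ via the $\mathfrak S_1$-convergence $\delta\mathbb G_x\to0$ supplied by Corollary~\ref{cor5.8} and Proposition~\ref{pr6.2}.

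There is, however, one genuine gap. You assert that $\mathbb M_x^+\in\mathfrak S_1$ ``by the classical short-range theory'' and that $\widetilde{\mathbb M}_x^+\to\mathbb M_x^+$ in $\mathfrak S_1$ ``from the short-range theory and Lemma~\ref{lem4.3}.'' Neither claim is available under the sole hypothesis $q_+\in L^1_1(\Reals_+)$: the paper explicitly remarks, immediately after the proof, that it could only establish $\mathbb M_x^+\in\mathfrak S_2$, not $\mathfrak S_1$, and that it was unable to find a proof of the trace-class statement in the literature. (Lemma~\ref{lem4.3} controls $T_+$, not $R_+$, and gives no $L^1(\Reals+ih)$ bound for $R_+$; indeed $R_+$ need not even extend off the real axis.) The paper therefore does \emph{not} rely on $\mathbb M_x^+$ being trace class. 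What it uses is only that $1+\mathbb M_x^+$ is boundedly invertible (classical Marchenko theory for short-range $q_+$) together with $\mathbb G_x\in\mathfrak S_1$; then $(1+\mathbb M_x^+)^{-1}\mathbb G_x\in\mathfrak S_1$ as the product of a bounded operator and a trace-class one, so the Fredholm determinant in \eqref{eqqm} is well defined. For the limit one needs only operator-norm convergence of $(1+\widetilde{\mathbb M}_x^+)^{-1}$ to $(1+\mathbb M_x^+)^{-1}$, combined with $\widetilde{\mathbb G}_x\to\mathbb G_x$ in $\mathfrak S_1$, to conclude $(1+\widetilde{\mathbb M}_x^+)^{-1}\widetilde{\mathbb G}_x\to(1+\mathbb M_x^+)^{-1}\mathbb G_x$ in $\mathfrak S_1$. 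Your argument is easily repaired along these lines, but as written it invokes a fact that is open.

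On your flagged ``main obstacle'' (commuting $\partial_x^2$ with the $a\to\infty$ limit): the paper's own proof is equally terse here, simply asserting that the limit of the truncated identity is \eqref{eqqm}. So you have correctly identified a point that deserves more care than either version supplies, but it is not a place where your proposal diverges from the paper.
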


\begin{remark}Theorem \ref{thm8.1} solves the inverse scattering problem for a steplike potential with the knowledge of its short range part. Indeed, given (short range) $q_+$ one solves the direct scattering problem and finds the scattering data $$\left\{ R_+(k),-(\varkappa_n)^2,c_n \right\}_{k\in\Reals, 1\le n\le N}$$ for $q_+$. Then, we construct $\mathbb{M}_x^+$ by \eqref{eqMx2}-\eqref{eqMxop} and given the (right) reflection coefficient for the whole potential $q$, one constructs by \eqref{eq8.-1}-\eqref{eq8.0} the Hankel operator $\mathbb{G}_x$. The unknown (non decaying) part of $q$ is recovered for each $x<0$ by \eqref{eqqm}. 
\end{remark}

\begin{proof} We will first prove the statement for $\tilde{q}$. For a fixed $a>0$, $\tilde{q}$ is compactly supported. Hence, $\widetilde{R}$ can be analytically continued in $\C^+$ except at a finite number of poles $\{i\widetilde{\varkappa}_n\}_{n=1}^{\widetilde{N}}$, and the Marchenko kernel \eqref{eqMker} becomes
\begin{equation*}
\widetilde{M}(s)= \frac{1}{2\pi}\int_{\Reals+ih}e^{iks}\widetilde{R}(k)dk \quad , \quad h>\max\{\widetilde{\varkappa}_n\}_{n=1}^{\widetilde{N}}.
\end{equation*}
Define $q_a(x)=\tilde{q}(x+a)$ as in Figure \ref{fig4}. Then $q_a$ is supported on $\Reals_-$ and $q_a\in \ell^\infty(L^2(\Reals_-))$.

\begin{figure}[htb]
\begin{tikzpicture}[scale=0.6]

\def \qminus {(-8.3,3.5) to[out=40,in=180] (-7.2,3.3) to[out=0,in=180] (-6.2,3.6) .. controls +(1.5,0) and +(-1,0) .. (-3.8,-1) .. controls +(1,0) and +(-1,0) .. (-2.1,3.7) to[out=0,in=165] (0,2.7) }

\def \qplus { (0,2.7) to[out=-15,in=150] (1.5,2.2) to[out=-30,,in=160] (3.3,1.2) to[out=-20,in=135] (5.3,0.6) to[out=-45,in=180] (6.1,-0.3) to[out=0,in=180] (7.4,0.4) to[out=0,in=180] (10,0) }

\def \a {3.7}

\def \qplusshaded { \qplus -| (0,2.7)}

\draw[help lines] (-8.5,0) -- (10,0)
                  (0,-1) -- +(0,6);
\draw (0,0) node[below left] {$0$};

\draw[dotted] 
\qminus \qplus;

\begin{scope}
\clip (0,-1) rectangle (-2*\a,4);
\draw[thick,blue,xshift=-\a cm]
\qminus \qplus;
\begin{scope}[xshift=-\a cm]
\clip \qplusshaded;
\foreach \y in {-2,-1.6,...,10}
\draw[domain=0:10,help lines] plot (\x,{8/5*(\x-\y)});
\end{scope}
\end{scope}

\foreach \k/\j in {-2/-2,-1/-,1/} 
\draw[help lines]
(\k*\a,0)+(0,-0.2) node[below] {{\j}a} -- +(0,0.2);

\draw (-2*\a,1.3) node[above] {$q_a(x)$};
\draw (2*\a,1.5) node[below] {$q(x)$};

\end{tikzpicture}
\caption{Shifted potential $q_a(x)=\tilde{q}(x+a)$}
\label{fig4}
\end{figure}
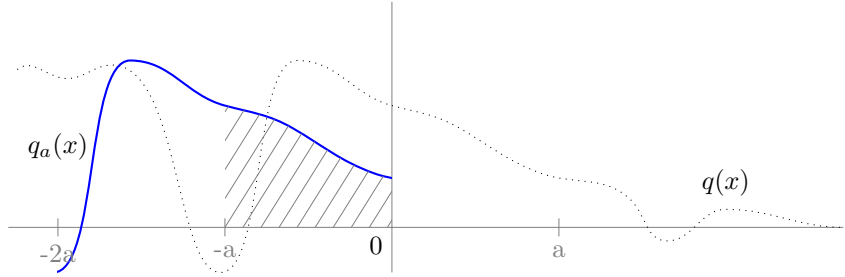

Now its reflection coefficient $R_{a,-}(k) \in L^1(\Reals+ih)$ for $h>h_a$ where $h_a$ is defined as in Proposition \ref{pr4.1} for $q_a$. But $\widetilde{R}(k)=R_{a,-}(k)e^{-2ika}$, so they share the same poles and by Corollary \ref{cor5.3}, $\widetilde{R}(k), \widetilde{R}(k)e^{2ikx}\in L^1(\Reals+ih)$ for $h>h_a$. We also have that $\widetilde{R}(k)e^{2ikx}$ is smooth on $\Reals+ih$  for $h>\max\{\widetilde{\varkappa}_n\}_{n=1}^{\widetilde{N}}$ so by Proposition \ref{pr6.2}, $\widetilde{\mathbb{M}}_x$ is trace class. 
Hence the following Bargmann formula applies:
\begin{equation}\label{BargtM}
\tilde{q}(x)=-2 \partial_x^2\log\det(1+\widetilde{\mathbb{M}}_x).
\end{equation}

Now write
$$\widetilde{R}= \widetilde{R}_+ + \widetilde{G} \quad, \quad \widetilde{G}={\Delta \widetilde{R}}= \widetilde{R}-\widetilde{R}_+$$
and split the Marchenko operator accordingly\footnote{Because $\tilde{q}_+$ is compactly supported, $\widetilde{{M}}^+$ can be equivalently expressed as \eqref{eqMker} or \eqref{eqMh}.}:
$$\widetilde{\mathbb{M}}_x= \widetilde{\mathbb{M}}_x^++ \widetilde{\mathbb{G}}_x. $$
The same $h_a$ is enough to ensure $\widetilde{\mathbb{M}}_x^+\in\mathfrak{S}_1$ since $\widetilde{R}_+(k)e^{2ika}$ corresponds to $\tilde{q}_a(x)$ (above the shaded region in Figure \ref{fig4}). 

In addition, for $h>h_0$ where $h_0$ is the same\footnote{Note that $h_0$ is independent of $a$.} as in Proposition \ref{pr4.3}, $\widetilde{G},G,\delta G \in L^1(\Reals+ih)$ and by Proposition \ref{pr5.5} and the subsequent remark, we also have $\widetilde{G},G,\delta G$ smooth on $\Reals+ih$. Since by Corollary \ref{cor5.8}, the same applies to $\widetilde{G}_x,G_x,\delta G_x$, we can apply Proposition \ref{pr6.2} and conclude that $\widetilde{\mathbb{G}}_x,\mathbb{G}_x,\delta\mathbb{G}_x \in \mathfrak{S}_1$.

Therefore, first we rewrite the Bargmann formula \eqref{BargtM} as:
\begin{align}
\tilde{q}(x) &=-2 \partial_x^2\log\det(1+\widetilde{\mathbb{M}}_x^++\widetilde{\mathbb{G}}_x) \notag \\
&=-2 \partial_x^2\log\det(1+\mathbb{M}_x^+)-2 \partial_x^2\log\det(1+(1+\widetilde{\mathbb{M}}_x^+)^{-1}\widetilde{\mathbb{G}}_x) \label{eqBargtMpG}
\end{align}
where we have used the fact from classical Marchenko theory that $1+\widetilde{\mathbb{M}}_x$ is boundedly invertible. But
\begin{equation*} 
\tilde{q}_+(x)= -2\partial_x^2 \log\det(1+\widetilde{M}_x^+)
\end{equation*}
and $q_+(x)=0$ for $x<0$. So \eqref{eqBargtMpG} becomes for $x<0$:
\begin{equation}\label{eqBargqm}
\tilde{q}_-(x) = -2 \partial_x^2\log\det(1+(1+\widetilde{\mathbb{M}}_x^+)^{-1}\widetilde{\mathbb{G}}_x).
\end{equation}

Now, we now use the fact that $(1+\mathbb{M}_x)$ remains boundedly invertible in its limit so the right hand side of \eqref{eqqm} is well-defined. In addition, by Proposition \ref{pr6.2}, for $h>h_0$
$$\norm{\delta\mathbb{G}_x}_{\mathfrak{S}_1} \leq \frac{1}{4\pi h}\norm{\delta G_x}_{L^1(\Reals+ih)} $$
and the right hand hand side of the inequality goes to zero by Corollary \ref{cor5.8} for $a\to\infty$. Therefore, we find indeed that the limit of \eqref{eqBargqm} is \eqref{eqqm}.
\end{proof}

Note that if $\mathbb{M}_x^+\in\mathfrak{S}_1$ then \eqref{eqqm} simplifies to
\begin{equation}
q(x)= -2\partial_x^2\log\det (1+\mathbb{M}_x),\quad x\in \Reals,
\end{equation}
where $\mathbb{M}_x=\mathbb{M}_x^++\mathbb{G}_x$. It is, of course, well-known (see e.g. \cite{Deift79}) that under our condition on $q_+$, $\mathbb{M}_x \in \mathfrak{S}_2$ but we couldn't prove it for $\mathfrak{S}_1$. We were unable to find a rigorous proof of such a statement in the literature either. (It is typically assumed (frequently implicitly) or referred to as ``too involved".) However, since $\mathbb{M}_x\in\mathfrak{S}_2$, then $\det(1+\mathbb{M}_x)$ can, in fact, be regularized differently from \eqref{eqqm} (see \cite{Ryprep} for details).

In conclusion, we emphasize that the fact that \eqref{eqqm} is understood in the classical sense is indeed quite important as it guarantees the convergence of various types of approximation of $\left(1+\mathbb{M}_x^+\right)^{-1}\mathbb{G}_x$ in trace norm. This, in turn, means a certain stability of the inverse problem algorithm based upon \eqref{eqqm}.

\end{document}